\def\ps@pprintTitle{%
 \let\@oddhead\@empty
 \let\@evenhead\@empty
 \def\@oddfoot{\centerline{\thepage}}%
 \let\@evenfoot\@oddfoot}
\date{}
\def\texpsfig#1#2#3{\vbox{\kern #3\hbox{\includegraphics{#1}\kern #2}}\typeout{(#1)}}
\theoremstyle{plain}
\newtheorem{thm}{Theorem}[section]
\newtheorem{dfn}[thm]{Definition}
\newtheorem{rem}{Remark}[section]
\theoremstyle{remark}
\theoremstyle{plain}
\newtheorem{lem}[thm]{Lemma}
\newtheorem{coroll}[thm]{Corollary}
\theoremstyle{definition}
\newcommand{\e}{{\rm e}}        
\def\R{\mathbb{ R}}             
\def\E{\mathbb{ E}}             
\def\Q{\mathbb{ Q}}             
\def\P{\mathbb{ P}}             
\def\Cov{\mathrm{\mathbb{C}ov}}
\def\corr{\mathrm{corr}}            
\def\Var{\mathrm{\mathbb{V}ar}}   
\renewcommand{\d}{{\rm d}}      
\def\e{{\mathrm{e}}}
\def\1{{\mathbbm{1}}}            
\DeclareMathOperator*{\argmin}{arg\,min}
\theoremstyle{plain}
\numberwithin{equation}{section}	     
\newcommand\red[1]{{\normalcolor#1}}
\title{Cheapest-to-Deliver Collateral: A Common Factor Approach}
\begin{document}

\author[1]{Felix L.\ Wolf}
\ead{felix.wolf@ulb.be}
\author[2,3]{Lech A.\ Grzelak}
\ead{L.A.Grzelak@tudelft.nl}
\author[1]{Griselda Deelstra}
\ead{griselda.deelstra@ulb.be}
\address[1]{Department of Mathematics, Universit\'e libre de Bruxelles, Brussels, Belgium}
\address[2]{Delft Institute of Applied Mathematics, Delft University of Technology, Delft, the Netherlands}
\address[3]{Rabobank, Utrecht, the Netherlands}

\begin{abstract}
\noindent
The collateral choice option gives the collateral posting party the opportunity to switch between different collateral currencies which is well-known to impact the asset price. Quantification of the option's value is of practical importance but remains challenging under the assumption of stochastic rates, as it is determined by an intractable distribution which requires involved approximations. Indeed, many practitioners still rely on deterministic spreads between the rates for valuation.
We develop a scalable and stable stochastic model of the collateral spreads under the assumption of conditional independence. This allows for a common factor approximation which admits analytical results from which further estimators are obtained. We show that in modelling the spreads between collateral rates, a second order model yields accurate results for the value of the collateral choice option. The model remains precise for a wide range of model parameters and is numerically efficient even for a large number of collateral currencies.

\end{abstract}

\begin{keyword}Collateral Choice Option, Currency Spreads, Conditional Independence, Factor Model, CSA 
\end{keyword}
\maketitle

{\let\thefootnote\relax\footnotetext{The views expressed in this paper are the personal views of the authors and do not necessarily reflect the views or policies of their current or past employers.}}

%
\section{Introduction}  \label{sec:introduction}
%
%
Collateralization describes a market mechanism in which outstanding exposure is covered by low-risk securities, usually in the form of cash or bonds. 
This effectively reduces counterparty credit risk, (\cite{simmons2018collateral}), but adds new challenges and features to the mathematical principles of asset pricing. 
Interest is paid on the posted collateral, determined by the contractually agreed collateral rate. When frictionless collateralization in continuous time is assumed, it can be shown that this collateral rate is exactly the funding rate of the collateralized asset (\cite{PiterbargFunding,Macey2011}). \red{This idea has been fully formalized in \cite{BieleckiRutkowski}, which also feature an analysis of most collateral conventions.}
When the Credit Support Annex (CSA) of the trade allows for different collateral securities, the choice of collateral creates an optionality: the selection of the optimal collateral security, referred to as the \emph{cheapest-to-deliver} (CTD) collateral.
\par
Consequently, this \emph{collateral choice option} should be considered in the pricing of the collateralized asset.
In this article, we introduce a novel approach for the pricing of the collateral choice option which allows for fast and accurate valuation, particularly, when many collateral currencies are available.
\par
Pricing the collateral choice option first found wide attention in the literature in the contributions \cite{FujiiRisk},  \cite{PiterbargCooking,PiterbargSticky}, \cite{PiterbargAntonov}, \cite{SankovichZhu} and \cite{McCloud}.
This happened in the aftermath of the global financial crisis, which makes it plausible to connect the sudden interest in this option with two main changes brought about.
First, the deterioration of interest rates reduced profit margins in the interest rate sector (\cite{Bikker17}), which prompted the search for even small contributions to profitability.
Secondly, cross-currency basis spreads widened during the crisis and afterwards, \red{which pointed to increased differences in the costs of currencies and directly affected the collateral choice option.}
%
%
\par
\smallskip
Throughout the article, we assume that the collateral securities are paid in cash. Then, the associated collateral rates are usually determined by proxies for a minimum-risk interest rate, for example the OIS rate. The model developed here can be applied to other securities, on condition that the associated collateral rate can be expressed in a short rate framework. We further assume that collateralization happens in continuous time without frictions such as payment thresholds.
\red{Discounting practices have been revised in many places after the global financial crisis. Discounting rates need to live up to higher scrutiny as the assumption of risk-free lending and borrowing is reexamined. In the case of collateralized trades, default risks are theoretically removed and discounting only needs to consider the time value of money. Then, the appropriate discounting rate is determined by the posted collateral.}

When there is only one collateral rate, $c_0$, available, \cite{PiterbargCooking} shows that the  measure $\Q_0$\red{, under which the valuation takes place,} is determined by the collateral rate $c_0$. Hence the price of an asset  $V$ with single pay-off $V(T)$ at time $T>0$ is given by 
\begin{equation}
V(0)  = \E^{\Q_0}\Bigl[\exp\bigl({-\int_0^T c_0(s)\d s}\bigr) V(T)\Bigr].
\end{equation}

\par
\smallskip
The collateral choice option arises, when there are multiple currencies $i \in \{0,\dots, N\}$ in which the collateral can be paid. Each of these currencies is associated with it's own collateral rate $c_i$, which are currency specific interest rates. To make these rates directly comparable, they need to be adjusted to act on a common base currency. 
Further, as the  measure is determined by the collateral rate, the presence of multiple collateral rates implies that for each collateral rate $c_i$, there is a  measure $\mathbb{Q}_i$, under which rate $c_i$ is the appropriate discounting rate. 
A consistent model is obtained by fixing one  measure under which all other collateral rates are considered. Typically, the measure chosen is \red{determined by} the domestic currency associated with the collateral rate $c_0$, corresponding to, for example, USD. Then, the dynamics of the remaining rates must be taken under this measure $Q_0$ and, additionally, they must be translated to the domestic currency. 
We directly consider such comparable, FX-adjusted collateral rates under the domestic  measure $\Q_0$ and denote these by $\{r_0, \dots, r_N\}$, where the domestic rate $r_0 = c_0$. 
A more detailed framework permitting different payment and collateral currencies is given in \cite{McCloud}.
\par
The exact value of the collateral choice option depends on the constraints given by the credit support annex (CSA) of the collateralized asset. 
Assuming that the entire collateral account can be switched from one collateral security to another at any time, results in the least restrictive scenario of \emph{full substitution} rights. This scenario can be considered the base case, which allows any collateral posting strategy and as such yields \red{a maximal estimate for the value of the collateral choice option.}
In this broader generalization, the collateral choice option can be detached from the collateralized asset, which is desirable as the valuation methods tends to be too involved to be computed for every collateralized asset individually. This formulation is considered in \cite{PiterbargCooking, PiterbargAntonov,SankovichZhu}.
A more realistic scenario is given under \emph{sticky collateral} rules, where existing collateral cannot be exchanged but newly posted collateral may be paid in the form of a different security.  In this case, the value of the collateral choice option takes dynamics similar to an American option and becomes asset-specific. Mathematically, this scenario extends the question to a (stochastic) control problem and is analysed in \cite{PiterbargSticky}.
\par
In this article, full substitution rights and  an instantaneous exchange of collateral in continuous time are assumed, differentiating the problem under consideration from margin valuation type problems. 
\par
\smallskip
Assume there are $N+1$ currencies available, each associated with an FX-adjusted collateral rate $r_i$, $i\in\{0,\dots,N\}$, in which the collateral can be posted and substituted freely. 
Given this optionality, the optimal strategy, i.e.\ the strategy yielding the highest interest on posted collateral, clearly consists of exchanging all collateral to the highest paying rate at any time. In other words, the valuation of  aforementioned single payment asset $V$ at time $T$ is now
\begin{equation}\label{eq:full-ctd}
\E^{\Q_0}\Bigl[\exp\Bigl({-\int_0^T \max\bigl(r_0(t), \dots, r_{N}(t)\bigr)\d t}\Bigr) V(T)\Bigr],
\end{equation}
with expectation taken under the fixed domestic  measure $\Q_0$.
\par
Accurately, the value of the collateral choice option still depends on each asset, due to interdependencies between the cashflows of the collateralized asset and the collateral rates. To alleviate this situation and obtain a single value, it is common to assume independence between the asset and the collateral rates (\cite{PiterbargCooking}), such that \eqref{eq:full-ctd} can be factored into
\begin{equation}
\E^{\Q_0}\Bigl[\e^{-\int_0^T \max\bigl(r_0(t), \dots, r_{N}(t)\bigr)\d t} V(T)\Bigr]\approx \E^{\Q_0}\Bigl[\e^{-\int_0^T \max\bigl(r_0(t), \dots, r_{N}(t)\bigr)\d t}\Bigr]\,\E^{\Q_0}\Bigl[V(T)\Bigr].
\end{equation}
\red{We emphasize that this is a strong assumption, as the collateralized asset will often not be independent of the collateral rates. This is particularly evident in the case of interest rate products. Nevertheless, it is a pragmatic assumption that practitioners utilize.}
 The resulting \emph{CTD discount factor},
\begin{equation}\label{eq:ctd}
\mathrm{DF}(0,T) = \E^{\Q_0}\Bigl[\exp\Bigl({-\int_0^T \max\bigl(r_0(t), \dots, r_{N}(t)\bigr)\d t\Bigr)}\Bigr],
\end{equation}
offers a first estimate of the collateral choice option's value.
\par
\smallskip
\begin{figure}
\centering
\includegraphics[width=.7\textwidth]{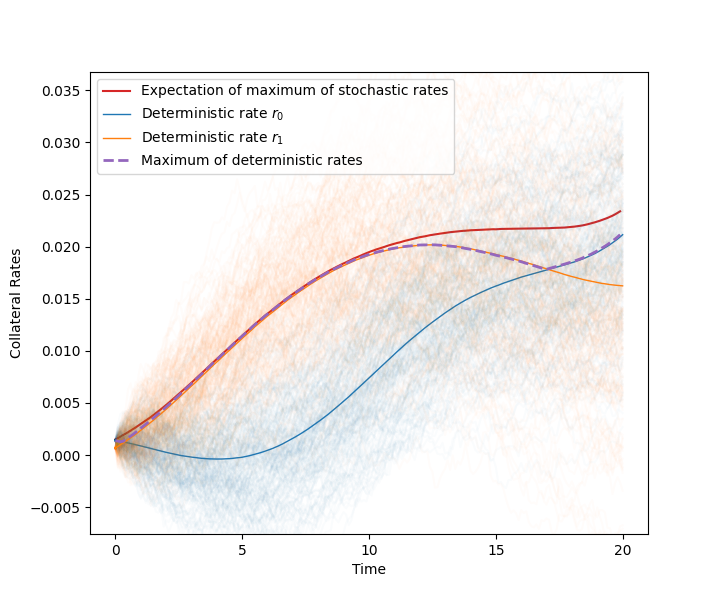}
\caption[]{Trajectories of stochastic and deterministic collateral rates. The expectation of the stochastic rates equals the deterministic rates, but the expectation of the maximum over the stochastic rates is larger than the maximum of the deterministic rates.}
\label{fig:stoch-vs-det}
\end{figure}
\red{
The inherent convexity of the discount factor formula \eqref{eq:ctd} implies that the result obtained from a deterministic model of the (FX-adjusted) collateral rates systematically differs from the result of a stochastic model.
This is illustrated in Figure~\ref{fig:stoch-vs-det}, where an example of deterministic and stochastic collateral rates is given, chosen such that the expectation of the stochastic rates equals the deterministic rates. As a direct consequence of Jensen's inequality, the maximum of the stochastic rates is always greater or equal to the maximum of the deterministic rates,
\begin{equation}
\E^{\Q_0}\Bigl[ \max\bigl(r_0(t), \dots, r_N(t)\bigr)\Bigr] \geq  \max\Bigl(\E^{\Q_0}\bigl[r_0(t)\bigr], \dots, \E^{\Q_0}\bigl[r_N(t)\bigr]\Bigr).
\end{equation}
When these maxima are non-negative, as in the collateral spread formulation introduced in the following section, the discount factor produced by the deterministic model is always greater or equal to the discount factor obtained from the stochastic model.}

\subsection{From collateral rates to collateral spreads}
A difficulty in the stochastic modelling of \eqref{eq:ctd} lies in the low analytical tractability of the distribution of the integral of the maximum of collateral rates, $\int_0^T \max(r_0(t), \dots, r_N(t))\d t$. 
In \cite{SankovichZhu}, an approximation scheme is developed, which matches an analytically tractable distribution, a quadratic polynomial of the normal distribution, to the first three moments of the integral of the maximum. This method focuses in particular on the preservation of the third moment of the integral, as the authors notice a significant skewness in the distribution of the integral of the maximum of FX-adjusted collateral rates. To obtain the three required moments of the integral of the maximum, an extension of the Clark procedure (\cite{Clark}) for the maximum of normal distributions is developed.
\par
In this article, we consider a model based on \emph{collateral spreads}, which are the difference processes between FX-adjusted collateral rates. We will show in this article, that they can be captured accurately by a model based on only two moments.
The collateral spread approach was introduced in \cite{PiterbargCooking}, where the differences between the collateral rates and a selected base collateral rate are modelled directly. 
To this end, we fix the base collateral rate $r_0$ and define the \emph{collateral spreads} $q_i$, $i\in\{1,\dots,N\}$ by 
\begin{equation}\label{eq:spread-def}
q_i := r_i - r_0.
\end{equation}
This allows for the transformation of the maximum of the collateral rates into a maximum of spreads, 
\begin{equation}
\max\bigl(r_0(t), \dots, r_{N}(t)\bigr) = r_0(t) + \max\bigl(0, q_1(t), \dots, q_N(t)\bigr).
\end{equation}
Applying this transformation to \eqref{eq:ctd} yields
\begin{align}
\mathrm{DF}(0,T) &= \E^{\Q_0}\Bigl[\exp\Bigl({-\int_0^T r_0(t) + \max\bigl(0, q_1(t), \dots, q_{N}(t)\bigr)\d t}\Bigr)\Bigr] \nonumber \\
&= P_0(0, T) \E^{T}\Bigl[\exp\Bigl({-\int_0^T  \max\bigl(0, q_1(t), \dots, q_{N}(t)\bigr)\d t}\Bigr)\Bigr], \label{eq:ctd-spread}
\end{align}
where $\E^T$ denotes expectation under the $T$-forward measure with num\'eraire $P_0(0,T):=\E^{\Q_0}[\exp(-\int_0^T r_0(t)\d t)]$.
In this setting, typically the spreads $q_i$, not the collateral rates, are provided with stochastic dynamics. Whilst the collateral spread maximum in \eqref{eq:ctd-spread} differs from the collateral rate maximum in \eqref{eq:ctd} by an additional constant zero component, the problem of low tractability persists.
\par
\red{In \cite{PiterbargAntonov}, the special case of exactly two collateral currencies is covered. Besides approximation schemes based on the first and second-order Taylor expansions, a conditional independence approach for the valuation of the collateral choice option is introduced, in which the distributions of the maximum process at different times 
are assumed conditionally independent.
}
\par
In this article, we develop an alternative approximation scheme for collateral spreads suitable to the case when more than two collateral currencies are available.
\par
\red{From a practical and risk management perspective, this is an essential advance over approximation of the multi-currency option with a carefully calibrated two-currency setup, as contributions of the individual currencies can be separated. 
Furthermore, it is not given that the high-dimensional multi-currency problem can generally be projected down to the one-dimensional problem solved by a two-currency spread setup.}
\par

\red{The collateral spread method proposed in this article}
is based on the observation that the volatilities of the collateral spreads $q_i$, $i\in\{1,\dots,N\}$ are significantly smaller than the volatilities of the FX-adjusted collateral rates considered in \cite{SankovichZhu}. This can be attributed to high correlations between the FX-adjusted collateral rates, which results in low volatility of their difference processes, the collateral spreads.
As a result, the third moment of the integral $\int_0^T \max(0, q_1(s), \dots, q_N(s))\d s$ is significantly smaller than the third moment of the integral $\int_0^T \max(r_0(s), \dots, r_N(s))\d s$. This allows us to create a stochastic model based on only two moments of the underlying distribution, which achieves high precision within a short computation time, even in the presence of a large number of available collateral currencies.
\par
\red{In this procedure, the marginal distributions of the maximum $\max(0, q_1(t), \dots, q_N(t))$ at each time $t\in(0,T]$ are approximated by a conditionally independent model. This allows for direct computation of the first two moments, $\E[\max(0, q_1(t), \dots, q_N(t))]$ and $\Var[\max(0, q_1(t), \dots, q_N(t))]$, on which the second-order model is built. It is important to note differences to the conditional independence assumption proposed in \cite{PiterbargAntonov}. Here, we assume that at fixed times the components of the maximum, the collateral spreads, are conditionally independent from another. \cite{PiterbargAntonov}, however, assume conditional independence between the maximum process distributions at different times to approximate the process structure. In the approach presented here, this particular obstacle is treated by two different estimators, which approximate the behaviour of the maximum process along time with a standard It\^o diffusion and a mean-reverting process, respectively.}
\par
In Section~\ref{sec:model}, we introduce the underlying collateral spread model and the first and second-order approximations of the CTD discount factor used. In Section~\ref{sec:First-order}, a conditional independence model called the \emph{common factor approximation} is defined, which directly results in a first order estimator of the CTD discount factor. Section~\ref{sec:Second-order} extends this result by two types of second-order estimators based on the common factor approximation. Section~\ref{sec:extension} and Appendix~\ref{sec:appx1} outline a model extension which supports more general correlation structures between the collateral currencies.
In Section~\ref{sec:example} we consider accuracy, model parameter sensitivity\red{, robustness in a stressed scenario} and speed of these first and second-order approximations of the collateral choice option.
%
%
%
%
\section{The CTD model}\label{sec:model}
\subsection{Collateral Spread Model}\label{sec:coll-spread-model}
We model the collateral spreads $q_i$, $i\in\{1,\dots,N\}$ between the FX-adjusted collateral rates $r_i$ and the designated base collateral rate $r_0$. These spreads are directly modelled under the $T$-forward measure, where $P_0(0, t) = \E^{\Q_0}[\exp(-\int_0^t r_0(s)\d s)]$ is the num\'eraire. In the following, all probabilities, expectations and variances are taken with respect to this measure.

The spreads $q_i$ are modelled with Hull--White dynamics of the form
\begin{equation}\label{eq:dHW}
\d q_i(t) = \kappa_i \bigl(\theta_i(t) - q_i(t)\bigr) \d t + \xi_i \d W_i(t), i \in \{1,\dots, N\},\ t\in(0,T],
\end{equation}
with initial values $q_i(0)\in\R$, speed of mean reversion parameters $\kappa_i\in\R^+$, long-term means $\{\theta_i(t)\in\R\colon t \in (0,T]\}$ and volatility parameters $\xi_i\in\R^+$. The driving stochastic processes $W_i$ are Brownian motions under the $T$-forward measure and correlated such that their quadratic covariation is $\d [W_i, W_j]_t = \rho_{i,j}\d t$ for some value $\rho_{i,j}\in[0,1)$ for $i,j\in\{1,\dots,N\}$ and $i \neq j$. 
From the well-known solution of \eqref{eq:dHW}, given in \cite{GrzelakOosterlee},
\begin{equation}\label{eq:HWsolution}
q_i(t) = q_i(0)\e^{-\kappa_i t} + \kappa_i\int\limits_0^t \theta_i(s)\e^{-(t-s)\kappa_i}\d s + \xi_i \e^{-\kappa_i t} \int\limits_0^t \e^{\kappa_i s}\d W_i(s),
\end{equation}
it is clear that, at each time $t\in(0,T]$, the spread $q_i(t)$ is normally distributed with expectation
\begin{equation}\label{eq:HW-expect}
\mu_i(t) := \E[q_i(t)] = q_i(0)\e^{-\kappa_i t} + \kappa_i\int\limits_0^t \theta_i(s)\e^{-(t-s)\kappa_i}\d s,
\end{equation}
and variance
\begin{equation}\label{eq:HW-var}
\sigma_i^2(t) := \Var[q_i(t)] = \frac{\xi_i^2}{{2\kappa_i}}{\bigl(1-\e^{-2\kappa_i t}\bigr)}.
\end{equation}
Further, at any $t\in(0,T]$ the  linear correlation between two different spreads $q_i(t)$ and $q_j(t)$ is given by
\begin{align}\label{eq:HW-corr}
\corr(q_i(t), q_j(t)) &:= \frac{\Cov\bigl[q_i(t), q_j(t)\bigr]}{\bigl(\Var[q_i(t)] \Var[q_j(t)]\bigr)^{1/2}} \nonumber \\
&=  2 \rho_{i,j} \frac{\sqrt{\kappa_i\kappa_j}}{\kappa_i + \kappa_j} \frac{1-\e^{-(\kappa_i+\kappa_j)t}}{\sqrt{(1-\e^{-2\kappa_i t})(1-\e^{-2\kappa_j t})}}.
\end{align}
\begin{rem}
The correlation of Hull--White processes $q_i$ and $q_j$ is only constant in time when they have the same speed of mean reversion  parameter, $\kappa_i = \kappa_j$. Otherwise, the correlation slowly decreases in time. 
\end{rem}

\subsection{Approximating the CTD}\label{sec:approxCTD}
The cheapest-to-deliver problem in \eqref{eq:ctd-spread} is characterized by the term 
\begin{equation}\label{eq:ctd-writtenout}
\E\Bigl[\exp\bigl({-\int_0^T \max(0, q_1(t), \dots, q_N(t))\d t}\bigr)\Bigr],
\end{equation}
for which no analytical solution is known when the spreads $q_i$ are modelled with Hull--White dynamics. 
In the following, we will introduce approximations which lead to analytically tractable formulas.
First, we establish some additional notation.
\begin{dfn}
Let $M$ be the maximum process defined for every $t\in(0,T]$ as
\begin{equation}
M(t) = \max\bigl(0, q_1(t), \dots, q_N(t)\bigr).
\end{equation}
Let $Y(T)$ be the integral of the maximum process until maturity $T$,
\begin{equation}
Y(T) = \int_0^T M(t) \d t.
\end{equation}
\end{dfn}
The object of interest is the first moment of the random variable $\exp(-Y(T))$, which is equivalent to the expression in \eqref{eq:ctd-writtenout}. 
An approximation of this first moment can be found by the delta method, in which the expectation $\E[\exp(-Y(T))]$ is replaced by the expectation of a polynomial approximation of $\exp(-Y(T))$. Applications of this method can be found in \cite{Grzelak2011}, \cite{amstrup2005handbook} and an analytical study of the method is given in \cite{oehlert1992}.
The Taylor approximation of $\exp(-Y(T))$ of order $\ell$ around expansion point $\E[Y(T)]$ is given by 
\begin{equation}\label{eq:fulltaylor}
\e^{\E[-Y(T)]}+ \sum\limits_{m=1}^\ell \frac{(-1)^m}{m!} \e^{\E[-Y(T)]} \bigl(Y(T)-\E[Y(T)]\bigr)^m,
\end{equation}
and depends on central moments of the integral, $\E[(Y(T) - \E[Y(T)])^m]$, of orders $m\in\{1,\dots,\ell\}$.
We focus on approximation schemes up to the second order, as these already yield very precise results which is shown in Section~\ref{sec:example}.
\par
\medskip
In the first order Taylor approximation, it holds 
\begin{align}
\E\bigl[\exp\bigl(-Y(T) \bigr)\bigr] &\approx \E\Biggl[\exp\Bigl(\E\bigl[-Y(T) \bigr]\Bigr) - \exp\Bigl(\E\bigl[-Y(T) \bigr]\Bigr) \bigl(Y-\E[Y(T)]\bigr)\Biggr] \nonumber \\
&=\exp\Bigl(-\E\bigl[Y(T) \bigr]\Bigr). \label{eq:exchange-exp}
\end{align}
That is, the first order approximation is equivalent to exchanging the exponential and expectation operators.
Analogously, the second-order approximation is given by 
\begin{align}
\begin{split}
\E\bigl[\exp\bigl(-Y(T) \bigr)\bigr] &\approx \E\Bigl[\exp\bigl(\E[-Y(T)]\bigr) - \exp\bigl(\E[-Y(T)]\bigr) \bigl(Y-\E[Y(T)]\bigr) \nonumber \\
  &\hphantom{\E\Bigl[} + \frac12 \exp\bigl(\E[-Y(T)]\bigr) \bigl(Y-\E[Y(T)]\bigr)^2 \Bigr]
\end{split}\\
&=\exp\bigl(-\E[Y(T)]\bigr)\Bigl( 1 + \frac12 \Var[Y(T)] \Bigr). \label{eq:2ndtaylor}
\end{align}
\par
We use the first and second-order approximations in \eqref{eq:exchange-exp} and \eqref{eq:2ndtaylor}, respectively, to estimate the cheapest-to-deliver discount factor in \eqref{eq:ctd-spread}.
\par
The first order model is introduced in Section~\ref{sec:First-order}. Its essential term $\E[Y(T)]$ is obtained from a common factor approximation, which admits a semi-analytical expression by the use of conditional independence.
\par
Two different second-order models are considered in Section~\ref{sec:Second-order}. Therein, the additionally required term $\Var[Y(T)]$ is approximated by two different estimators, both based on the previously established common factor model. The first estimator, denoted diffusion estimator, is treated in Section~\ref{subsec:diffusion} and the second estimator, denoted mean-reverting estimator, in Section~\ref{subsec:meanreverting}.
\section{First order model with a common factor approximation}\label{sec:First-order}

We need to find the first moment of the integral $Y(T)$. By use of Fubini's theorem, it is possible to exchange integration and expectation to obtain
\begin{equation}
\E\bigl[Y(T)\bigr] = \E\Bigl[\int_0^T M(t) \d t\Bigr]  = \int_0^T \E\bigl[M(t)\bigr] \d t.
\end{equation}
In consequence, the expectation $\E[M(t)]$ can be considered individually at every time $t$. As the model is intended for numerical implementation, we define it directly over a time discretization  $\mathcal{T} = \{t_k\colon 0 \leq k \leq R\}$ of $[0,T$] with  $R\in\mathbb{N}$ many steps and uniform step size $\Delta t = T/R$.
\par
\medskip
The first order approximation is thus obtained from a pointwise evaluation of $\E[M(t_k)]$ at each time $t_k \in \mathcal{T}$. For each such time, a \emph{common factor approximation} of the random vector $(q_1(t_k), \dots, q_N(t_k))$ is defined, which admits a semi-analytical solution.

The common factor approximation belongs to the class of Gaussian copula models, which are widely used in financial applications 
(\cite{Meissner2014, cherubini2004copula}). 
In a Gaussian copula model, components $x_i$ are weighted sums of independent normal random variables. Exemplarily, $x_i$ could be expressed as the sum of a shared random variable $C$ which appears in every component with a component-specific weight $\alpha_i$, and an individual random variable $A_i$, which is specific to the component:
\begin{equation}
x_i = \alpha_i C + A_i.
\end{equation}
This shared random variable $C$ is often called the \emph{common factor}. It is central to our model that each component places equal weight $\alpha_i = 1$ on the common factor, as this enables the core argument in \eqref{eq:indep-max}. To emphasize this fact, we underline the importance of the common factor in the name of the approximation.

\subsection{Common factor approximation}\label{subsec:cf}
At every time $t_k\in\mathcal{T}$ the spreads $(q_1(t_k), \dots, q_N(t_k))$ form a multivariate normal random vector. We introduce a \emph{common factor approximation} $(\widetilde q_1(t_k), \dots, \widetilde q_N(t_k))$ of this random vector which allows us to compute moments of the maximum of these approximations and zero, $\widetilde M(t_k) = \max(0, \widetilde q_1(t_k), \dots, \widetilde q_N(t_k))$. These moments approximate the desired moments of $M(t_k) = \max(0, q_1(t_k), \dots, q_N(t_k))$. 
\par
The use of a common factor approximation is motivated by the fact that the distribution of the maximum of correlated normal distributions is computationally demanding. A sufficiently general analytical solution was only recently obtained in \cite{Nadarajah2018} and still depends on infinite sums. On the other hand, the distribution of the maximum of independent random variables is simply the product of the involved marginal distributions as recalled in Lemma~\ref{lem:max-indep}. 
However, market observations show that the involved collateral spreads are correlated. Therefore, we assume conditional independence, which lets us express the required maximum in terms of independent components, leading to a model which allows for correlations between the spreads while admitting a fast solution.
\par
The common factor approximation is defined pointwise at each time $t_k \in \mathcal{T}$, and must be repeated at any such time in the time discretization.
\par
We formalize the model in the following definition.
\begin{dfn}[Common factor approximation]\label{def:cf}
Let $q_i(t_k) \sim \mathcal{N}(\mu_i(t_k), \sigma_i^2(t_k))$, $i\in\{1,\dots,N\}$ be normal random variables. 
The \emph{common factor approximation} of $(q_1(t_k), \dots, q_N(t_k))$ is given by the random vector $(\widetilde q_1(t_k), \dots, \widetilde q_N(t_k))$, defined componentwise as
\begin{equation}
\widetilde q_i(t_k) = C(t_k) + A_i(t_k),\ i\in\{1,\dots, N\},
\end{equation}
where $C(t_k), A_1(t_k), \dots, A_N(t_k)$ are independent random variables with distributions
\begin{align}
C(t_k) &\sim \mathcal{N}\bigl(0, \sigma_{\min}^2(t_k) |\gamma(t_k)|  \bigr),\ \gamma(t_k) \in [0,1),   \label{eq:C}  \\
A_i(t_k) &\sim \mathcal{N}\bigl(\mu_i(t_k), \sigma_i^2(t_k) - \Var[C(t_k)]\bigr),\ i\in\{1,\dots,N\}.      \label{eq:Ai}
\end{align}
Here, $\sigma_{\min}^2(t_k) = \min(\sigma_1^2(t_k),\dots, \sigma_N^2(t_k))$ denotes the smallest variance occurring among the random variables $q_i(t_k)$.
\end{dfn}
The parameter $\gamma(t_k)\in[0, 1)$ is called the \emph{correlation optimization parameter} of the common factor approximation and controls the correlation structure of the common factor approximation.
\begin{lem}[Properties of the common factor approximation]\label{lem:cfprops}
Let $(\widetilde q_1(t_k),\dots, \widetilde q_N(t_k))$ be the common factor approximation given in Definition~\ref{def:cf}. 
\begin{enumerate}
\item The marginal distributions of the common factor approximation $(\widetilde q_1(t_k),\dots, \widetilde q_N(t_k))$ and the random variables $(q_1(t_k),\dots, q_N(t_k))$ coincide:
\begin{equation}
\widetilde q_i(t_k) \overset{d}= q_i(t_k) \sim \mathcal{N}(\mu_i(t_k), \sigma_i^2(t_k)),\ i\in\{1,\dots,N\}.
\end{equation}
\item The correlations of the common factor approximation depend only on the correlation optimization parameter $\gamma(t_k)$ and the standard deviations $(\sigma_1(t_k), \dots, \sigma_N(t_k))$:
\begin{equation}\label{eq:cf-corr}
\corr\bigl(\widetilde q_i(t_k), \widetilde q_j(t_k)\bigr) = \frac{\sigma_{\min}^2(t_k) |\gamma(t_k)|} { \sigma_i(t_k) \sigma_j(t_k)},\ i\neq j.
\end{equation}
\item The correlations of the common factor approximation are bounded from above by the ratio
\begin{equation}
0 \leq \corr(\widetilde q_i(t_k), \widetilde q_j(t_k)) < \frac{\sigma_{\min}^2(t_k)} { \sigma_i(t_k) \sigma_j(t_k)}.
\end{equation}
\end{enumerate}
\end{lem}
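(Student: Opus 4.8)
The plan is to verify all three claims directly from the defining decomposition $\widetilde q_i(t_k) = C(t_k) + A_i(t_k)$ of \Cref{def:cf}, exploiting that a sum of independent normal random variables is again normal with additive means and variances. Throughout I suppress the time argument $t_k$ and write $\gamma := \gamma(t_k)$, $\sigma_{\min}^2 := \sigma_{\min}^2(t_k)$, $\sigma_i := \sigma_i(t_k)$ and $\mu_i := \mu_i(t_k)$ for readability.

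First I would establish claim (1). Since $C$ and $A_i$ are independent and normal, their sum $\widetilde q_i$ is normal with mean $\E[C] + \E[A_i] = 0 + \mu_i = \mu_i$ and variance
\begin{equation*}
\Var[\widetilde q_i] = \Var[C] + \Var[A_i] = \sigma_{\min}^2|\gamma| + \bigl(\sigma_i^2 - \sigma_{\min}^2|\gamma|\bigr) = \sigma_i^2,
\end{equation*}
using $\Var[C] = \sigma_{\min}^2|\gamma|$ from \eqref{eq:C} and $\Var[A_i] = \sigma_i^2 - \Var[C]$ from \eqref{eq:Ai}. Hence $\widetilde q_i \overset{d}{=} q_i \sim \mathcal{N}(\mu_i, \sigma_i^2)$. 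A prerequisite here is that $A_i$ is well defined, i.e. that its variance $\sigma_i^2 - \sigma_{\min}^2|\gamma|$ is non-negative; this holds because $\sigma_i^2 \geq \sigma_{\min}^2$ by the definition of $\sigma_{\min}^2$ as the minimal variance, while $|\gamma| < 1$, so that $\sigma_{\min}^2|\gamma| < \sigma_{\min}^2 \leq \sigma_i^2$. This is the single point that genuinely invokes the admissibility constraint $\gamma \in [0,1)$, and I expect it to be the only step that is not purely mechanical.

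Next I would prove claim (2). For $i \neq j$, bilinearity of the covariance together with the mutual independence of $C, A_1, \dots, A_N$ annihilates every cross term except the shared factor, leaving
\begin{equation*}
\Cov[\widetilde q_i, \widetilde q_j] = \Cov[C + A_i,\, C + A_j] = \Var[C] = \sigma_{\min}^2|\gamma|.
\end{equation*}
Dividing by the product of standard deviations $\sigma_i \sigma_j$ identified in claim (1) yields exactly \eqref{eq:cf-corr}.

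Finally, claim (3) follows at once from claim (2) and the range $\gamma \in [0,1)$: since $0 \leq |\gamma| < 1$ we have $0 \leq \sigma_{\min}^2|\gamma| < \sigma_{\min}^2$, and dividing through by $\sigma_i \sigma_j > 0$ gives the asserted two-sided bound. The upper bound is strict precisely because $|\gamma| < 1$ is strict, while the lower bound is attained at $\gamma = 0$, where the common factor vanishes and the components become independent.
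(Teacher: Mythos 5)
Your proof is correct and takes essentially the same route as the paper's: both verify the marginals by adding means and variances of the independent summands $C(t_k)$ and $A_i(t_k)$, reduce the covariance to $\Var[C(t_k)] = \sigma_{\min}^2(t_k)|\gamma(t_k)|$ by independence, and obtain the strict upper bound from $|\gamma(t_k)| < 1$. Your explicit well-definedness check that $\Var[A_i(t_k)] = \sigma_i^2(t_k) - \sigma_{\min}^2(t_k)|\gamma(t_k)| > 0$ is the same observation the paper makes in part~3, where $|\gamma(t_k)|<1$ is shown to be necessary for the minimal-variance component.
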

In particular, this lemma shows that the correlations of the common factor approximation are different from the correlations of the approximated random variables, $\corr(q_i(t_k), q_j(t_k))$. They only depend on the correlation optimization parameter $\gamma(t_k)$ which has to be chosen appropriately.
\begin{proof}
\begin{enumerate}
\item By definition, the marginal distributions of the components $\widetilde q_i(t_k)$, $i\in\{1,\dots,N\}$ are sums of normal distributions and thus normally distributed. It further holds for all $i\in\{1,\dots,N\}$:
\begin{equation}
\E\bigl[\widetilde q_i(t_k)\bigr] = \E\bigl[C(t_k) + A_i(t_k)\bigr] = \E\bigl[A_i(t_k)\bigr] = \mu_i(t_k),
\end{equation}
and by independence of $C(t_k)$ and $A_i(t_k)$, 
\begin{equation}
\Var\bigl[\widetilde q_i(t_k)\bigr] = \Var\bigl[C(t_k)\bigr] + \Var\bigl[A_i(t_k)\bigr] = \sigma_i^2(t_k).
\end{equation}
\item For all $i\neq j \in\{1,\dots,N\}$ it holds by independence of $C(t_k)$, $A_i(t_k)$ and $A_j(t_k)$ that
\begin{align}
\corr(\widetilde q_i(t_k), \widetilde q_j(t_k)) &= \frac{\Cov[C(t_k) + A_i(t_k), C(t_k) + A_j(t_k)]}{\bigl(\Var[\widetilde q_i(t_k)]\Var[\widetilde q_j(t_k)]\bigr)^{1/2}} \nonumber \\
&= \frac{\Var[C(t_k)]}{\bigl(\Var[\widetilde q_i(t_k)]\Var[\widetilde q_j(t_k)]\bigr)^{1/2}} = \frac{\sigma_{\min}^2(t_k)|\gamma(t_k)|}{\sigma_i(t_k) \sigma_j(t_k)}. 
\end{align}
\item The correlation bounds follow immediately from the bounds on the correlation parameter $\gamma(t_k)$. To see that $|\gamma(t_k)|<1$ is a necessary condition, assume that $q_j(t_k)$, $1\leq j\leq N$ is the random variable with minimal variance, $\sigma_j^2(t_k) = \sigma_{\min}^2(t_k)$. For the variance $\Var[A_j(t_k)]$ to be positive, it must hold that $|\gamma(t_k)|<1$.
\end{enumerate}
\vspace{-1em}
\end{proof}
The correlation matrix of the spreads at time $t_k$, denoted by $R(t_k)$, with entries
\begin{equation}
R_{i,j}(t_k) = \corr\bigl(q_i(t_k), q_j(t_k)\bigr),\ {i,j\in\{1,\dots, N\}},
\end{equation}
is determined by the correlation formula for Hull--White processes in \eqref{eq:HW-corr}. 
In general, the correlation matrix of the common factor approximation, $\widetilde R(t_k)$, with entries 
\begin{equation}
\widetilde R_{i,j}(t_k) = \corr\bigl(\widetilde q_i(t_k), \widetilde q_j(t_k)\bigr),\ {i,j\in\{1,\dots, N\}},
\end{equation}
which are determined by \eqref{eq:cf-corr}, need not coincide with the correlation matrix $R(t_k)$
for any value of the correlation optimization parameter $\gamma(t_k)$. 
\par
However, an optimal parameter $\gamma^*(t_k)$ can be found which minimizes the distance between these two correlation matrices. If this distance is chosen as the Frobenius matrix norm, the minimization becomes equivalent to the convex optimization problem
\begin{align}\label{eq:frobnorm}
\min\limits_{\gamma(t_k)} \Biggl(\sum\limits_{i=1}^N \sum\limits_{j=1}^N \Bigl(\corr\bigl(\widetilde q_i(t_k),\widetilde q_j(t_k)\bigr) - \corr\bigl(q_i(t_k),  q_j(t_k)\bigr)\Bigr)^2 \Biggr)^{1/2}.
\end{align}

In order to find the optimal parameter $\gamma^*(t_k)$, we apply a numerical solver to
\begin{align}\label{eq:optimalgamma}
\gamma^*(t_k) =\argmin_{\gamma(t_k)} \Biggl(\sum\limits_{i=1}^N \sum\limits_{j=1}^N \Bigl(\frac{\sigma_{\min}^2(t_k) |\gamma(t_k)|} {\sigma_i(t_k) \sigma_j(t_k)} - \corr\bigl(q_i(t_k),  q_j(t_k)\bigr)\Bigr)^2 \Biggr)^{1/2}.
\end{align}

\par
In the important case of three available collateral currencies, one currency is encoded in the zero component of the maximum and $N=2$ spreads remain, with one correlation  to consider between them. Then, at every time $t_k$, the common factor correlation matrix $\widetilde R(t_k)$ can be matched exactly to the spread correlation matrix $R(t_k)$, by setting the correlation optimization parameter $\gamma(t_k)$ to
\begin{equation}
\gamma(t_k) = \corr\bigl(q_1(t_k), q_2(t_k)\bigr) \frac{\max\bigl(\sigma_1(t_k), \sigma_2(t_k)\bigr)}{\min\bigl(\sigma_1(t_k), \sigma_2(t_k)\bigr)}.
\end{equation}
Within the bounds given in Lemma~\ref{lem:cfprops} part $3$, this choice ensures equal correlations
\begin{equation}
\corr\bigl(q_1(t_k), q_2(t_k)\bigr) = \corr\bigl(\widetilde q_1(t_k), \widetilde q_2(t_k)\bigr).
\end{equation}
\par
\medskip
We recall that the distribution of the maximum $M(t_k) = \max(0, q_1(t_k), \dots, q_N(t_k))$ at time $t_k$ is not suitably tractable. By replacing the random variables $q_i(t_k)$ with the components of the common factor approximation, a semi-analytical expression is obtained. To this end, define the \emph{common factor maximum}
\begin{equation}\label{eq:cfmax}
\widetilde{M}(t_k) = \max\bigl(0, \widetilde q_1(t_k), \dots, \widetilde q_N(t_k)\bigr).
\end{equation}
The common factor maximum can be rewritten as
\begin{align}
\widetilde M(t_k) &= \max\bigl(0, C(t_k) + A_1(t_k), \dots, C(t_k) + A_N(t_k)\bigr) \nonumber \\
&= C(t_k) + \max\bigl(-C(t_k), A_1(t_k), \dots, A_N(t_k)\bigr), \label{eq:indep-max}
\end{align}
where the last term is a maximum over independent random variables. In the following lemma, we recall that the cumulative distribution function of such a maximum is available in closed form. 
With the cumulative distribution function at hand, arbitrary moments, including the expectation, can be computed.
In particular, an expression for the common factor maximum $\E[\widetilde M(t_k)]$ is found, which approximates the expectation of the maximum of the spreads, $\E[M(t_k)]$.
\begin{lem}[The maximum of independent random variables]\label{lem:max-indep} 
Let $Z_1, \dots, Z_N$ be independent random variables. The cumulative distribution function of the maximum over all the random variables $Z_i$, $i\in\{1,\dots,N\}$ is given by
\begin{equation}\label{eq:lem-max-cdf}
\P[\max(Z_1, \dots, Z_N) \leq x]  = \prod\limits_{i=1}^N \P[Z_i \leq (x)].
\end{equation}
\end{lem}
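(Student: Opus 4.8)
The plan is to reduce the statement about the maximum to a statement about an intersection of events, and then to invoke independence. The central observation is purely set-theoretic and requires no distributional assumptions: for any real number $x$, the maximum $\max(Z_1, \dots, Z_N)$ is at most $x$ if and only if every individual variable $Z_i$ is at most $x$. Hence the event appearing on the left-hand side of \eqref{eq:lem-max-cdf} can be rewritten as a finite intersection,
\[
\{\max(Z_1, \dots, Z_N) \leq x\} = \bigcap_{i=1}^N \{Z_i \leq x\}.
\]

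First I would establish this event equality carefully in both directions. The inclusion $\{\max(Z_1, \dots, Z_N) \leq x\} \subseteq \bigcap_{i=1}^N \{Z_i \leq x\}$ follows because $Z_i \leq \max(Z_1, \dots, Z_N)$ for each $i$, so whenever the maximum does not exceed $x$, neither does any single $Z_i$. Conversely, if $Z_i \leq x$ holds for all $i$, then $x$ is an upper bound for the finite collection $\{Z_1, \dots, Z_N\}$ and therefore $\max(Z_1, \dots, Z_N) \leq x$, which gives the reverse inclusion.

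Next I would pass to probabilities and apply independence. By hypothesis the random variables $Z_1, \dots, Z_N$ are independent, which means precisely that, for each fixed $x$, the events $\{Z_1 \leq x\}, \dots, \{Z_N \leq x\}$ are mutually independent. Consequently the probability of the intersection factorizes into the product of the marginal probabilities,
\[
\P\Bigl[\bigcap_{i=1}^N \{Z_i \leq x\}\Bigr] = \prod_{i=1}^N \P[Z_i \leq x],
\]
which combined with the event equality of the previous step yields exactly \eqref{eq:lem-max-cdf}.

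There is no genuine obstacle here, as the result is an elementary and standard fact; the only point deserving care is the justification of the event equivalence directly from the definition of the maximum, together with the remark that finiteness of the index set $\{1, \dots, N\}$ guarantees the supremum is attained, so that $\max$ is well-defined and the upper-bound argument is valid. In particular, the proof uses independence only at the single step where the intersection probability is factored, and nowhere relies on the variables being identically distributed or continuous.
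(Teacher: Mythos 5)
Your proof is correct and follows essentially the same route as the paper's: rewriting $\{\max(Z_1,\dots,Z_N)\leq x\}$ as the intersection $\bigcap_{i=1}^N\{Z_i\leq x\}$ and then factorizing by independence. You merely spell out the two set inclusions and the role of finiteness, which the paper's one-line proof leaves implicit.
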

\begin{proof}
It holds $\P[\max(Z_1, \dots, Z_N) \leq x] = \P[Z_1 \leq x, \dots, Z_N \leq x] = \prod\limits_{i=1}^n \P[Z_i \leq x]$ which factorizes by independence.
\end{proof}
\par
Applying Lemma~\ref{lem:max-indep} to $\max(-C(t_k), A_1(t_k), \dots, A_N(t_k))$ in \eqref{eq:indep-max} yields that its distribution function, denoted by  $F_k$, is given by
\begin{equation}\label{eq:max-cdf}
F_{k}(x) := \Phi\Biggl(\frac{x}{\sqrt{\sigma_{\min}^2(t_k)|\gamma(t_k)|}}\Biggr) \prod\limits_{i=1}^N \Phi\Biggl(\frac{x-\mu_i(t_k)}{\sqrt{\sigma_i^2(t_k) - \sigma_{\min}^2(t_k) |\gamma(t_k)|}}\Biggr),
\end{equation}
where $\Phi$ denotes the standard normal cumulative distribution function.
The following lemma recalls the well known result that the cumulative distribution function can be used to compute moments of a random variable.
\begin{lem}[Moments from the CDF]\label{lem:cdfmoments}
Let $Z$ be a random variable with cumulative distribution function $F_Z\colon \R \to [0,1]$ and let $m\in\mathbb{N}$. When the $m^\text{th}$ moment of $Z$ is finite, it is given by
\begin{equation}
\E[Z^m] = \int_{-\infty}^0 (-1)^m m x^{m-1} F_Z(x) \d x + \int_0^\infty m x^{m-1} \bigl(1-F_Z(x)\bigr)\d x.
\end{equation}
\end{lem}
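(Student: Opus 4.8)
The plan is to recognise this as the layer-cake (tail) representation of a moment, and to obtain it by writing $\E[Z^m]=\int_{\R}x^m\,\d F_Z(x)$ and splitting the range of integration at the origin, handling the two half-lines separately so that the nonnegative part produces the survival-function term and the negative part produces the distribution-function term. The split is legitimate because finiteness of the $m$-th moment guarantees that $\int_{\R}|x|^m\,\d F_Z(x)<\infty$, so each half-line integral is finite and may be manipulated independently.

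On $[0,\infty)$ I would use the elementary identity $x^m=\int_0^x m t^{m-1}\,\d t$, valid for $x\ge 0$, and interchange the order of integration by Fubini--Tonelli; since the integrand is nonnegative, Tonelli applies with no extra hypotheses. This rewrites $\E\bigl[Z^m\1_{\{Z>0\}}\bigr]$ as $\int_0^\infty m t^{m-1}\,\P[Z>t]\,\d t=\int_0^\infty m t^{m-1}\bigl(1-F_Z(t)\bigr)\,\d t$, where I use $\P[Z>t]=1-F_Z(t)$ up to the at most countably many atoms of $F_Z$, which are Lebesgue-null and hence irrelevant under the $\d t$ integral. This reproduces the second term in the statement verbatim.

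On $(-\infty,0]$ I would treat $\int_{-\infty}^0 x^m\,\d F_Z(x)$ by integration by parts together with the factorisation $x^m=(-1)^m|x|^m$, which pulls the sign $(-1)^m$ out in front; the boundary term at $0$ vanishes because $x^m\to0$ there, and the boundary term at $-\infty$ vanishes by the tail argument below. After discharging these boundary contributions, the negative-axis piece reduces to $\int_{-\infty}^0(-1)^m F_Z(x)\,\d x$, which is the first term in the statement. Adding the two half-line contributions then yields the claimed identity.

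I expect the main obstacle to be the rigorous justification of the negative-axis term, and in particular the vanishing of the boundary contribution as $x\to-\infty$. This is exactly where the hypothesis $\E[|Z|^m]<\infty$ is needed: a Markov/Chebyshev tail estimate gives $|x|^m F_Z(x)=|x|^m\,\P[Z\le x]\to 0$ as $x\to-\infty$, which both kills the boundary term and certifies absolute convergence of all the integrals involved, so that the integration by parts and the Fubini interchange are both valid. Once this tail control is established, the remainder is routine bookkeeping of the signs coming from $x^m$ on the negative axis, and the two half-line contributions assemble into the stated formula.
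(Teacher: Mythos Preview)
First, a point of comparison: the paper does not actually prove this lemma; it is merely stated as a ``well known result'' and then applied. So there is no paper proof to compare your approach against.

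More importantly, the stated identity is \emph{false} for $m\ge 2$ whenever $Z$ has nontrivial negative part. Take $Z=-2$ a.s.\ and $m=2$: the right-hand side gives $\int_{-2}^0 1\,\d x=2$, whereas $\E[Z^2]=4$. Your plan on the negative half-line is the right one, but the claimed outcome is wrong. Integrating $\int_{-\infty}^0 x^m\,\d F_Z(x)$ by parts with $u=x^m$, $\d v=\d F_Z$ gives, after discarding the boundary terms exactly as you describe,
\[
\int_{-\infty}^0 x^m\,\d F_Z(x)
\;=\; -\int_{-\infty}^0 m x^{m-1} F_Z(x)\,\d x
\;=\; (-1)^m\int_{-\infty}^0 m|x|^{m-1}F_Z(x)\,\d x,
\]
not $(-1)^m\int_{-\infty}^0 F_Z(x)\,\d x$. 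The weight $m|x|^{m-1}$ coming from $\d u$ has been silently dropped in your sketch. The correct general formula is therefore
\[
\E[Z^m] \;=\; -\int_{-\infty}^0 m x^{m-1} F_Z(x)\,\d x \;+\; \int_0^\infty m x^{m-1}\bigl(1-F_Z(x)\bigr)\,\d x,
\]
which reduces to the paper's version only when $m=1$ or when $F_Z$ vanishes on $(-\infty,0)$.

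This does not harm the paper's applications: the lemma is invoked for $m=1$ in computing $\E[\widetilde M(t_k)]$, and for $m=2$ only on the random variable $\widetilde M(t_k)\ge 0$ whose CDF vanishes on the negative axis, so the two formulas agree in those cases. But as a proof of the lemma \emph{as stated}, your argument cannot succeed, because the statement itself is not correct for general $m$.
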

\medskip
By setting $m=1$, Lemma~\ref{lem:max-indep} and Lemma~\ref{lem:cdfmoments} yield an expression for the expectation of the common factor maximum. 
\begin{coroll}[Expectation of the common factor maximum]\label{cor:expMtilde}
The expectation of the common factor maximum $\widetilde M(t_k)$ at times $t_k \in \mathcal{T}$ is given by
\begin{align}
\E[\widetilde M(t_k)] &= \E\bigl[\max(0,\widetilde q_1(t_k), \dots, q_N(t_k))\bigr] \nonumber \\
&=\int_{-\infty}^0 -F_k(x)\d x + \int_0^\infty (1-F_k(x)) \d x. \label{eq:firstmaxmom}
\end{align}
\end{coroll}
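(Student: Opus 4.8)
The plan is to combine the decomposition of the common factor maximum in \eqref{eq:indep-max} with the moment formula of \Cref{lem:cdfmoments} specialized to order $m=1$. First I would take the expectation of both sides of the representation
\begin{equation}
\widetilde M(t_k) = C(t_k) + \max\bigl(-C(t_k), A_1(t_k), \dots, A_N(t_k)\bigr)
\end{equation}
from \eqref{eq:indep-max}, and exploit linearity of the expectation together with the fact that $C(t_k)\sim\mathcal{N}(0,\sigma_{\min}^2(t_k)|\gamma(t_k)|)$ has mean zero by \eqref{eq:C}. This reduces the claim to computing the expectation of the independent maximum $\max(-C(t_k), A_1(t_k), \dots, A_N(t_k))$ alone.

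Second, I would recall that \Cref{lem:max-indep} has already been used to identify the cumulative distribution function of this independent maximum as $F_k$ in \eqref{eq:max-cdf}. With $F_k$ in hand, the expectation is obtained by a single application of \Cref{lem:cdfmoments} at $m=1$: the prefactor $(-1)^m$ becomes $-1$ and the factor $m x^{m-1}$ becomes $1$, so the two integrals of that lemma collapse exactly to $\int_{-\infty}^0 -F_k(x)\d x$ and $\int_0^\infty (1-F_k(x))\d x$, which is the asserted formula.

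The only point that requires a word of justification --- and which I would address before invoking \Cref{lem:cdfmoments} --- is the finiteness hypothesis of that lemma. This is immediate here: each of $-C(t_k)$ and $A_1(t_k),\dots,A_N(t_k)$ is Gaussian and therefore integrable, and a maximum of finitely many integrable random variables is dominated by the sum of their absolute values, hence integrable as well. Consequently the first moment exists and \Cref{lem:cdfmoments} applies. Since every remaining step is either linearity of expectation or a direct substitution into an already-established lemma, I do not expect any genuine obstacle; the entire argument is a short assembly of the preceding results rather than a new computation.
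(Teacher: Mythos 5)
Your proposal is correct and takes essentially the same route as the paper, which likewise obtains \eqref{eq:firstmaxmom} by combining the decomposition \eqref{eq:indep-max} (so that, since $\E[C(t_k)]=0$, the problem reduces to the independent maximum whose CDF $F_k$ in \eqref{eq:max-cdf} comes from \Cref{lem:max-indep}) with \Cref{lem:cdfmoments} at $m=1$. Your explicit verification of the finiteness hypothesis, via domination of the maximum of finitely many Gaussians by the sum of their absolute values, is a small addition the paper leaves implicit but changes nothing in the argument.
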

We approximate the expectation of the collateral spread maximum $M(t_k)$ by the expectation of the common factor maximum $\widetilde M(t_k)$,
\begin{equation}
\E\bigl[M(t_k)\bigr] \approx \E\bigl[\widetilde M(t_k)\bigr],
\end{equation}
and repeat these steps at all times $t_k \in \mathcal{T}$. Thus, we arrive at a common factor estimator for the first order approximation in \eqref{eq:exchange-exp}.
%
%
\red{
\begin{dfn}[First order common factor estimator]\label{def:firstordercf}
The first order approximation derived in Section~\ref{sec:approxCTD} of the spread based CTD discount factor in \eqref{eq:ctd-spread} is
\begin{equation}
\exp\Bigl(\E\Bigl[{-\int_0^T  M(t)\d t}\Bigr]\Bigr) \approx \E\Bigl[\exp\Bigl({-\int_0^T  M(t)\d t}\Bigr)\Bigr],
\end{equation}
where $M(t) = \max\bigl(0, q_1(t), \dots, q_{N}(t)\bigr)$ is the maximum of the collateral spreads and zero.
 
 If at every time $t_k$ in a time discretization $\mathcal{T}$ of $[0,T]$, the collateral spreads $q_1(t_k), \dots, q_N(t_k)$ are approximated by the conditionally independent common factor approximation $\widetilde q_1(t_k), \dots, \widetilde q_N(t_k)$ given in Definition~\ref{def:cf}, then the first order common factor estimator is given by 
\begin{equation}\label{eq:CF1}
\mathrm{CF}_1(T) = \exp \Bigl( -\int\limits_{t_k\in\mathcal{T}} \E[\widetilde M(t_k)] \d t_k\Bigr),
\end{equation}
where $\E[\widetilde M(t_k)] = \E[\max(0, \widetilde q_1(t_k), \dots, \widetilde q_N(t_k))]$ is the expectation of the common factor maximum given in Corollary~\ref{cor:expMtilde}.
Here the integral over discrete times $t_k$ indicates the use of a suitable discretization.
\end{dfn}
}
 In the implementation of Section~\ref{sec:example}, we will use a simple discretization in terms of left sums:
\begin{equation}
\mathrm{CF}_1(T)  \approx \exp \Bigl( -\sum\limits_{t_k\in\mathcal{T}} \E[\widetilde M(t_k)] \Delta t_k\Bigr).
\end{equation}
%
%
\section{Second-order models with the common factor approximation}\label{sec:Second-order}
Having obtained the first order common factor estimator \eqref{eq:CF1}, the second-order approximation in \eqref{eq:2ndtaylor},
\begin{equation}
\E\bigl[\exp\bigl(Y(T)\bigr)\bigr] \approx \exp\Bigl(\E\bigl[Y(T)\bigr]\Bigr) \Bigl( 1 + \frac12 \Var\bigl[Y(T)\bigr]\Bigr),
\end{equation}
additionally requires the variance of the integral, $\Var[Y(T)] = \Var[\int_0^T M(t)\d t]$. 
 \par
An explicit solution of this variance depends on covariance terms across times, 
\begin{equation}
\Cov[M(t), M(s)],\quad s,t\in(0,T], 
\end{equation} 
which are not available in closed form for the maximum of Hull--White processes. We will therefore introduce two different approximations of the variance $\Var[Y(T)]$.
\par
The first estimator, called diffusion-based estimator, builds on a model which neglects the mean reverting dynamics of the collateral spreads and approximates the variance of the integral of the maximum with the variance of the integral of a related It\^o process, an approach which turns out to be numerically very efficient. 
\par
The second, mean-reversion-based estimator builds on the approximative variance of the integral expression developed in \cite{SankovichZhu} and accounts for the mean reverting structure of the involved processes. Difficulties arise because the speed of mean reversion of the maximum process $M(t)$ is a stochastic quantity that depends on the component which is largest at the considered time. To solve this, an average speed of mean reversion  $\kappa(t)$ is attributed to the maximum process, which weights the different speed of mean reversion parameters of the components with their probability of being the maximum, 
\begin{equation}
\kappa(t) = \sum_{i=1}^N \P\bigl[q_i(t) = \max(0, q_1(t), \dots, q_N(t))\bigr] \kappa_i.
\end{equation}
Within the framework of the common factor approximation, these probabilities become analytically tractable at interpolation times $t_k \in \mathcal{T}$.
The mean-reversion-based estimator achieves higher accuracy, in particular under large speed of mean reversion parameters, at the expense of additional computational complexity. Accuracy of these two approaches under different model parameters is studied in Section~\ref{sec:example}.
%
%
\subsection{The diffusion-based estimator}\label{subsec:diffusion}
The diffusion-based estimator approximates the variance of the integral of the maximum with the variance of the integral of a standard It\^o process $(X(t))_{t\in[0,T]}$:
\begin{equation}
\Var\bigl[\int_0^T M(t) \d t\bigr] \approx \Var\bigl[\int_0^T X(t)\d t\bigr].
\end{equation}
The process $X$ is defined such that it approximates the variance of the maximum process $M$. This is achieved by matching the marginal distributions of $X(t_k)$ to the variance of the common factor maximum $\widetilde{M}(t_k)$ for all times $t_k\in\mathcal{T}$. 
That is, we require $\Var[X(t_k)] = \Var[\widetilde M(t_k)]$. 
\begin{dfn}[Auxiliary It\^o process]\label{def:ItoX}
Define 
\begin{equation}\label{eq:defX}
\d X(t) = {h(t)}\d W^X(t),\ t\in(0,T],\ X(0)=0,
\end{equation}
with driving Brownian motion $W^X$ independent of the Brownian motions $W_i$, $i\in\{1,\dots,N\}$. 
The volatility coefficient $h$ is defined piecewise constant between interpolation points $t_k\in\mathcal{T}$ by 
\begin{equation}\label{eq:volacoeff}
h^2(t) = \frac{1}{\Delta t}\Bigl(\Var\bigl[\widetilde M(t_{k+1})\bigr] - \Var\bigl[\widetilde M(t_k)\bigr]\Bigr), \ t\in(t_k, t_{k+1}].
\end{equation}
\end{dfn}
As a consequence of \eqref{eq:volacoeff}, at all interpolation points $t_k \in \mathcal{T}$ the desired property holds:
\begin{align}
\Var[X(t_k)] &= \int_0^{t_k} h^2(s)\d s = \Var[\widetilde M(t_k)]. \label{eq:hasM}
\end{align}
The next lemma expresses the integral variance $\Var[\int_0^T X(t)\d t]$ as a function of the volatility coefficient $h$.
\begin{lem}[Integral variance of $X(t)$]\label{lem:intvar}
Let $X(t)$ be the It\^o process given in Definition~\ref{def:ItoX}. The variance of its integral is given by
\begin{align}\label{eq:intvarX}
\Var\Bigl[\int_0^T X(t) \d t\Bigr] &= 
\int\limits_0^T\int\limits_0^s \Bigl(\int\limits_0^t h^2(u)\d u\Bigr) \d t \d s + \int\limits_0^T (T-s) \Bigl(\int\limits_0^s h^2(t)\d t\Bigr)\d s.
\end{align}
\end{lem}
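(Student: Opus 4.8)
The plan is to use that $X$ is a zero-mean Gaussian (It\^o) integral, reduce the integral variance to a double time integral of the covariance kernel, compute that kernel explicitly by the It\^o isometry, and split the domain of integration to recover the two summands. First I would note that $X(t)=\int_0^t h(u)\,\d W^X(u)$ has $\E[X(t)]=0$ for every $t\in[0,T]$, so by Fubini $\E[\int_0^T X(t)\,\d t]=0$ and the variance coincides with the second moment. Writing the square of the integral as a double integral and exchanging expectation with the time integration gives
\[
\Var\Bigl[\int_0^T X(t)\,\d t\Bigr] = \E\Bigl[\int_0^T\!\!\int_0^T X(s)X(t)\,\d s\,\d t\Bigr] = \int_0^T\!\!\int_0^T \E[X(s)X(t)]\,\d s\,\d t .
\]

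Next I would compute the covariance kernel. Since both $X(s)$ and $X(t)$ are stochastic integrals against the same Brownian motion $W^X$, the It\^o isometry yields
\[
\E[X(s)X(t)] = \int_0^{\min(s,t)} h^2(u)\,\d u,\qquad s,t\in[0,T].
\]
Substituting this kernel into the double integral and, for fixed $s$, splitting the $t$-integration at $t=s$ finishes the computation: on $[0,s]$ one has $\min(s,t)=t$, which produces the first summand of \eqref{eq:intvarX}; on $[s,T]$ one has $\min(s,t)=s$, so the inner integral equals the $t$-independent quantity $\int_0^s h^2(u)\,\d u$, and integrating over $t\in[s,T]$ contributes the factor $(T-s)$, producing the second summand.

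The only delicate point is the interchange of expectation and the (double) time integral; this is the main, though routine, obstacle, and is justified by Tonelli's theorem together with the uniform boundedness of $h$ guaranteed by its piecewise-constant definition in \eqref{eq:volacoeff}. All remaining steps are the It\^o isometry and an elementary splitting of the region of integration.
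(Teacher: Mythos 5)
Your proof is correct and follows essentially the same route as the paper's: It\^o isometry to obtain the covariance kernel $\E[X(s)X(t)]=\int_0^{s\wedge t}h^2(u)\,\du$, followed by splitting the double time integral at $t=s$ to produce the two summands, including the $(T-s)$ factor. Your explicit justification of the expectation--integral interchange (via integrability of the kernel, which the paper leaves implicit) is a welcome but inessential addition.
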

\begin{proof}
This result follows directly from the solution $X(t) = \int_0^t {h(u)}\d W^X(u)$. By It\^o isometry it holds that
\begin{align}
\int\limits_0^T\int\limits_0^T \E\Bigl[\int_0^t {h(u)}\d W^X(u) \int_0^s {h(v)}\d W^X(v)\Bigr] \d t \d s &= 
\int\limits_0^T \int\limits_0^T \int\limits_0^{t\wedge s} h^2(u) \d u \d t \d s \nonumber \\
&= \int\limits_0^T\int\limits_0^s \Bigl(\int\limits_0^t h^2(u)\d u\Bigr) \d t \d s \nonumber \\
&\hspace{1em} + \int\limits_0^T\int\limits_s^T \Bigl(\int\limits_0^s h^2(u)\d u\Bigr) \d t \d s.
\end{align}
The second term in the sum equals
\begin{equation}
\int\limits_0^T\int\limits_s^T \Bigl(\int\limits_0^s h^2(u)\d u\Bigr) \d t \d s = \int\limits_0^T(T-s)\Bigl(\int\limits_0^s h^2(u)\d u\Bigr) \d s,
\end{equation}
which concludes the proof.
\end{proof}
In the numerical evaluation over the time discretization $\mathcal{T}$, the formula for the variance of the integral of $X$ can be further simplified by replacing the inner integrals with \eqref{eq:hasM}.
\par
It follows from Lemma~\ref{lem:intvar}, that the variance of the integral of $X$ depends only on the volatility coefficient $h$, which is determined by the variance of the common factor maximum, $\Var[\widetilde M(t_k)]$ at the interpolation times $t_k\in\mathcal{T}$.
In the following lemma, we show how this variance can be computed. This result is comparable to Corollary~\ref{cor:expMtilde} in the first order approximation, which gave the expectation of the common factor maximum.
\par
First, we introduce additional notation. Recall that under the common factor approximation of Definition~\ref{def:cf}, the common factor $C(t_k)$ is a centred normal random variable, whose density function we denote by
\begin{equation}\label{eq:pdfC}
f_{C(t_k)}(x) = \frac{1}{\sqrt{2\pi\gamma(t_k)\sigma^2_{\min}(t_k)}}\exp\Bigl(-\frac{x}{\sqrt{\gamma(t_k)\sigma^2_{\min}(t_k)}}\Bigr),\ x\in\R.
\end{equation}
We further denote the cumulative distribution function of the maximum of the individual random variables, $\max(A_1(t_k), \dots, A_N(t_k))$ by $F_{\max_i(A_i(t_k))}$. With the help of Lemma~\ref{lem:max-indep} it is found to be
\begin{equation}\label{eq:cdfmaxAi}
F_{\max_i(A_i(t_k))}(x) = \prod\limits_{i=1}^N \Phi\Bigl(\frac{x-\mu_i(t_k)}{\sqrt{\sigma_i^2(t_k) - \sigma_{\min}^2(t_k)|\gamma(t_k)|}}\Bigr),\ x\in\R.
\end{equation}
These two functions in place, we can give the variance of the common factor maximum.
\begin{lem}[Variance of the common factor maximum]\label{lem:cfmaxvar}
Let $\widetilde M(t_k) = \max(0, \widetilde q_1(t_k), \dots, \widetilde q_N(t_k))$ be the common factor maximum at time $t_k\in\mathcal{T}$.
The variance of $\widetilde M(t_k)$ is given by
\begin{equation}\label{eq:convolutionvar}
\begin{aligned}
\Var\bigl[\widetilde M(t_k)\bigr] = \int_0^\infty 2x\Bigl(1- \bigl( f_{C(t_k)} * F_{\max_i(A_i(t_k))} \bigr)(x) \Bigr)\d x \\
 - \left( \int_0^\infty 1 -  \bigl( f_{C(t_k)} * F_{\max_i(A_i(t_k))} \bigr) (x) \d x \right)^2,
\end{aligned}
\end{equation}
where $*$ denotes the convolution operator.
\end{lem}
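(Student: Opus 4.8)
The plan is to compute the variance of the common factor maximum by first deriving its cumulative distribution function in closed form and then invoking \Cref{lem:cdfmoments}, which represents a moment as an integral of the distribution function. The decisive structural feature is that $\widetilde M(t_k) = \max(0, \widetilde q_1(t_k), \dots, \widetilde q_N(t_k))$ is non-negative by construction, so its distribution function vanishes on $(-\infty, 0)$ and only the tail integral over $[0, \infty)$ survives in \Cref{lem:cdfmoments}. The whole difficulty therefore concentrates in identifying this distribution function on the positive half-line with the convolution $f_{C(t_k)} * F_{\max_i(A_i(t_k))}$.

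First I would fix $x \ge 0$ and unwind the event $\{\widetilde M(t_k) \le x\}$. Because $x \ge 0$, the zero entry of the maximum is automatically dominated, and using the representation $\widetilde q_i(t_k) = C(t_k) + A_i(t_k)$ of \Cref{def:cf} one obtains
\begin{equation}
\{\widetilde M(t_k) \le x\} = \bigcap_{i=1}^N \{C(t_k) + A_i(t_k) \le x\} = \{\max_i A_i(t_k) \le x - C(t_k)\}.
\end{equation}
Conditioning on $C(t_k) = c$ and using that $C(t_k)$ is independent of $(A_1(t_k), \dots, A_N(t_k))$, the conditional probability of this event is $\P[\max_i A_i(t_k) \le x - c]$, which by \Cref{lem:max-indep} factorizes into the product of Gaussian terms collected in $F_{\max_i(A_i(t_k))}$. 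Integrating against the density $f_{C(t_k)}$ of the common factor yields, for every $x \ge 0$,
\begin{equation}
\P[\widetilde M(t_k) \le x] = \int_{\R} f_{C(t_k)}(c)\, F_{\max_i(A_i(t_k))}(x - c)\, \d c = \bigl(f_{C(t_k)} * F_{\max_i(A_i(t_k))}\bigr)(x).
\end{equation}

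With the distribution function determined on $[0, \infty)$, I would apply \Cref{lem:cdfmoments} to $\widetilde M(t_k)$. Since $\widetilde M(t_k) \ge 0$, its distribution function is identically zero for $x < 0$, so the integral over $(-\infty, 0)$ in \Cref{lem:cdfmoments} vanishes and only the tail integral carrying the weight $2x$ remains. Substituting the convolution identity for the distribution function then produces exactly the stated expression $\int_0^\infty 2x\bigl(1 - (f_{C(t_k)} * F_{\max_i(A_i(t_k))})(x)\bigr)\, \d x$ for $\Var[\widetilde M(t_k)]$.

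The main obstacle is the convolution identity. Two points require care: the reduction of the joint event to $\max_i A_i(t_k) \le x - C(t_k)$, which is valid precisely because $x \ge 0$ lets the zero component drop out (for $x < 0$ the identity would fail, but non-negativity of $\widetilde M(t_k)$ makes that range irrelevant), and the interchange of integration underlying the convolution and \Cref{lem:cdfmoments}, which is justified by Fubini's theorem once finiteness is established. Finiteness is immediate: each $\widetilde q_i(t_k)$ is Gaussian, so $\widetilde M(t_k)$ is dominated by $\max_i |\widetilde q_i(t_k)|$ and has finite moments of all orders, guaranteeing convergence of the tail integral.
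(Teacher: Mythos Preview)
Your proof is correct and follows essentially the same route as the paper: identify the distribution function of $\widetilde M(t_k)$ on $[0,\infty)$ as the convolution $f_{C(t_k)} * F_{\max_i(A_i(t_k))}$, exploit non-negativity to discard the $(-\infty,0)$ contribution in \Cref{lem:cdfmoments}, and read off the claimed integral. The only cosmetic difference is that the paper first rewrites $\widetilde M(t_k)=\max(0,C(t_k)+\max_i A_i(t_k))$ and then computes the distribution of the inner sum via the joint density, whereas you condition on $C(t_k)$ directly; the two derivations of the convolution identity are equivalent.
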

\begin{proof}Let $t_k\in\mathcal{T}$ be fixed. The common factor maximum $\widetilde M(t_k)$ can be expressed as
\begin{align}
\widetilde M(t_k) = \max(0, \widetilde q_1(t_k), \dots, \widetilde q_N(t_k)) &= \max(0, \max(\widetilde  q_1(t_k), \dots, \widetilde q_N(t_k))) \nonumber \\ 
&= \max(0, C(t_k) + \max(A_1(t_k), \dots, A_N(t_k))). \label{cfmaxothertrick}
\end{align}
Let $G_k$ denote the cumulative distribution function of the random variable $ \max(0, C(t_k) + \max(A_1(t_k), \dots, A_N(t_k)))$ and let $H_k$ denote the cumulative distribution function of the random variable $C(t_k) + \max(A_1(t_k), \dots, A_N(t_k))$.
Then,
\begin{equation}
G_k(x) = 
	\begin{cases}
	0, & x < 0,\\
	H_k(x), & x \geq 0.
	\end{cases}
\end{equation}
As it holds $\Var[\widetilde M(t_k)] = \E[\widetilde M(t_k)^2] - \E[\widetilde M(t_k)]^2$, it suffices to obtain the first two moments of $\widetilde M(t_k)$ from the cumulative distribution function with Lemma~\ref{lem:cdfmoments}. These are given by
\begin{align}
\E[\widetilde M(t_k)^2] &= \int_{-\infty}^0 2x G_k(x) \d x + \int_0^\infty 2x\bigl(1-G_k(x)\bigr)\d x \nonumber \\
&= \int_0^\infty 2x\bigl(1-H_k(x)\bigr)\d x. \label{eq:newVar}
\end{align}
and
\begin{align}
\E[\widetilde M(t_k)] &= \int_{-\infty}^0 - G_k(x) \d x + \int_0^\infty \bigl(1-G_k(x)\bigr)\d x \nonumber \\
& = \int_0^\infty 1 -  H_k (x) \d x. \label{eq:convolutionexpec}
\end{align}

It remains to show that $H_k(x) = (f_{C(t_k)} * F_{\max_i(A_i(t_k))})(x)$. As $C(t_k)$ and $A_i(t_k)$ are independent for all $i$, $C(t_k)$ is also independent of $\max(A_1(t_k), \dots, A_N(t_k))$ and thus it holds that
\begin{align}
H_k(x) &= \P[C(t_k) + \max(A_1(t_k), \dots, A_N(t_k)) \leq x] \nonumber \\
&= \int_{-\infty}^\infty \int_{-\infty}^{x-z} f_{C(t_k), \max_i(A_i(t_k))}(z,y)\d y \d z \nonumber \\
&= \int_{-\infty}^\infty \int_{-\infty}^{x-z} f_{C(t_k)}(z) f_{\max_i(A_i(t_k))}(y)\d y \d z \nonumber \\
&= \int_{-\infty}^\infty f_{C(t_k)}(z) F_{\max_i(A_i(t_k))}(x-z) \d z \nonumber \\
&= \bigl( f_{C(t_k)} * F_{\max_i(A_i(t_k))} \bigr) (x).
\end{align}
Here, $f_{C(t_k), \max_i(A_i(t_k))}$ denotes the joint density of $C(t_k)$ and $\max(A_1(t_k), \dots, A_N(t_k))$ which factors into the marginal densities $f_{C(t_k)}$ and $f_{\max_i(A_i(t_k))}$ by independence.
\end{proof}
\par
\medskip
We call the variance of the integral of the auxiliary It\^o process the {diffusion-based variance estimator} of the variance of the integral of the maximum of the collateral spreads, and denote it by $\Psi(T)$:
\begin{equation}\label{eq:diffvarestimator}
\Psi(T) := \Var[\int_0^T X(t)\d t] \approx \Var[\int_0^T M(t)\d t].
\end{equation}
This allows us to define a first, diffusion-based CTD estimator of the second-order approximation in \eqref{eq:2ndtaylor}.
%
%
\red{
\begin{dfn}[The second-order common factor estimator 1]\label{def:CF2-diff}
The second-order approximation derived in Section~\ref{sec:approxCTD} of the spread based CTD discount factor in \eqref{eq:ctd-spread} is
\begin{equation}
\exp\Bigl(\E\Bigl[{-\int_0^T  M(t)\d t}\Bigr]\Bigr) \Bigl(1 + \frac12 \Var\int_0^T  M(t)\d t  \Bigr) \approx \E\Bigl[\exp\Bigl({-\int_0^T  M(t)\d t}\Bigr)\Bigr],
\end{equation}
where $M(t) = \max\bigl(0, q_1(t), \dots, q_{N}(t)\bigr)$ is the maximum of the collateral spreads and zero.

Let the conditional independence assumptions hold as in Definition~\ref{def:firstordercf} and let $\mathrm{CF}_1(T)$ denote the first order common factor estimator,
\begin{equation}
\mathrm{CF}_1(T) = \exp \Bigl( -\int\limits_{t_k\in\mathcal{T}} \E[\widetilde M(t_k)] \d t_k\Bigr),
\end{equation}
where $t_k\in\mathcal{T}$ are times in a suitable time discretization of $[0,T]$.

Assume that the variance of the integral of the maximum process can be approximated by the variance of the integral of the It\^o process $X$, given in Definition~\ref{def:ItoX}, with variance $\Var[X(t_k)] = \Var[\widetilde M(t_k)]$ obtained from the common factor maximum in Lemma~\ref{lem:cfmaxvar}.

Then the second-order common factor estimator with a diffusion-based integral variance estimator is given by 
\begin{align}\label{eq:CF2-diff}
\mathrm{CF}_2^{(1)}(T) &:= \mathrm{CF}_1(T) \Bigl(1 + \frac12\Psi(T)\Bigr),
\end{align}
where $\Psi(T) = \Var[\int_0^T X(t)\d t]$ is the diffusion-based variance estimator given by Lemma~\ref{lem:intvar}.
\end{dfn}
}
\subsection{The mean-reversion-based estimator}\label{subsec:meanreverting}
A more accurate approximation of $\Var[\int_0^T M(t)\d t]$ can be found on the basis of the dynamics of the maximum process $M(t)$. 
In \cite{SankovichZhu}, an estimator is developed for the maximum of Hull--White collateral rates, a setting where no constant zero process is part of the maximum. We briefly outline the argument in our setting of collateral spreads, referring to the aforementioned reference for a detailed proof \red{and details of all technical assumptions}.
\par
For the moment, assume that the spread $q_0$ is also a Hull--White process with speed of mean reversion parameter $\kappa_0 > 0$, in line with the remaining spreads $q_i$, $i\in\{1,\dots, N\}$.
Then, the dynamics of the maximum process $\max(q_0(t), \dots, q_N(t))$ can be obtained by the It\^o-Tanaka formula (see e.g.\ \cite{revuzyor}). However, these dynamics are not of a closed, purely analytical form but instead they contain additional stochastic terms, which create a path dependence to whichever component of the maximum is the maximal component at a time.
As a consequence, the mean reversion speed of the maximum process $\max(q_0(t), \dots, q_N(t))$ takes the form of a random variable:
\begin{equation}\label{eq:stochMR}
\sum_{i=0}^N 1_{\{\max(q_0(t),\dots,q_N(t))=q_i(t)\}}\, \kappa_i.
\end{equation}
Define the event that $q_i$ is the maximal spread at time $t$ as $D_i(t) = \{\max(q_0(t),\dots,q_N(t))=q_i(t)\}$ for all $i\in\{0, \dots, N\}$.
Then analytic tractability is restored by replacing the stochastic expression \eqref{eq:stochMR} with its expectation. This defines a weighted speed of mean reversion $\kappa(t)$ given by
\begin{align}
\kappa(t) = \sum_{i=0}^N \E\bigl[1_{D_i(t)}\bigr]\, \kappa_i 
= \sum_{i=0}^N \P\bigl[D_i(t)\bigr]\, \kappa_i.
\end{align}

With additional simplifying assumptions (we again refer to \cite{SankovichZhu}), an estimator for the variance of the integral of the maximum is found, which takes the form
\begin{multline}\label{eq:longeq}
\Var\bigl[\int\limits_0^T  \max(q_0(t), \dots, q_N(t))\d t\bigr] \approx \\ 
2\int\limits_0^T  \e^{-\int\limits_0^t\kappa(u)\d u} 
\int\limits_0^t \e^{\int\limits_0^s\kappa(u)\d u}\Var\bigl[\max\bigl(q_0(s),\dots,q_N(s)\bigr)\bigr]\d s \d t.
\end{multline}

\par
We return to our setting of the maximum process $M(t)=\max(0, q_1(t), \dots, q_N(t))$. In particular, the first component $q_0 = 0$ is not a mean-reverting process any longer. 
Thus, in the weighted mean reversion function 
\begin{equation}
\kappa(t) = \sum\limits_{i=0}^N\P\bigl[M(t)=q_i(t)\bigr] \kappa_i, \label{eq:kappa(t)}
\end{equation}
there is no natural candidate for what should be the speed of mean reversion $\kappa_0$ of the zero spread. Therefore, we remove this component from the weighted mean reversion function $\kappa(t)$ by setting $\kappa_0 = 0$.
\par
The probabilities $\P[M(t) = q_i(t)]$ in the weighted mean reversion function can be estimated with the common factor approximation by substituting $\P[\widetilde M(t) = \widetilde q_i(t)]$ which is easily computed, as the next lemma shows.
\begin{lem}\label{lem:maxprobs}[Common Factor Maximum Probabilities] 
Under the assumptions of Definition~\ref{def:cf}, for all times $t_k\in\mathcal{T}$ and $i\in\{1,\dots,N\}$, the probability that the common factor spread $\widetilde q_i(t_k)$ equals the common factor maximum $\widetilde M(t_k)$ is given by
\begin{equation}\label{eq:maxprobs}
\P\bigl[\widetilde M(t_k) = \widetilde q_i(t_k)\bigr] = \int\limits_{\R} f_{A_i(t_k)}(x) F_{C(t_k)}(x)\prod\limits_{\substack{j=1\\ j\neq i}}^N F_{A_j(t_k)}(x) \d x,
\end{equation}
where $f_{A_i(t_k)}$ denotes the density of the normal random variable $A_i(t_k)$ and $F_{C(t_k)}$, $F_{A_j(t_k)}$ the cumulative distribution functions of the normal random variables $C(t_k)$ and $A_j(t_k)$, respectively. 
\end{lem}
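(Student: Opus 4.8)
The plan is to reduce the event $\{\widetilde M(t_k) = \widetilde q_i(t_k)\}$ to a condition on the idiosyncratic components $A_i(t_k)$ alone, exploiting that the common factor $C(t_k)$ cancels in every pairwise comparison. Starting from the decomposition $\widetilde q_i(t_k) = C(t_k) + A_i(t_k)$ and the representation \eqref{eq:indep-max}, the event that $\widetilde q_i(t_k)$ attains the maximum requires both $\widetilde q_i(t_k) \geq 0$ and $\widetilde q_i(t_k) \geq \widetilde q_j(t_k)$ for every $j \neq i$. The first condition reads $A_i(t_k) \geq -C(t_k)$, while the second becomes $A_i(t_k) \geq A_j(t_k)$ because the shared term $C(t_k)$ drops out. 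Hence
\begin{equation}
\{\widetilde M(t_k) = \widetilde q_i(t_k)\} = \{A_i(t_k) \geq -C(t_k)\} \cap \bigcap_{\substack{j=1\\ j\neq i}}^N \{A_i(t_k) \geq A_j(t_k)\},
\end{equation}
which is exactly the event that $A_i(t_k)$ is the maximum of $\{-C(t_k), A_1(t_k), \dots, A_N(t_k)\}$. Since all the involved laws are continuous, ties carry probability zero and the distinction between strict and weak inequalities is immaterial.

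Next I would condition on $A_i(t_k) = x$ and integrate against its density $f_{A_i(t_k)}$. By the mutual independence of $C(t_k), A_1(t_k), \dots, A_N(t_k)$ assumed in \Cref{def:cf}, the conditional probability of the remaining constraints factorizes as
\begin{equation}
\P\bigl[{-C(t_k)} \leq x\bigr] \prod_{\substack{j=1\\ j\neq i}}^N \P\bigl[A_j(t_k) \leq x\bigr] = \P\bigl[{-C(t_k)} \leq x\bigr] \prod_{\substack{j=1\\ j\neq i}}^N F_{A_j(t_k)}(x).
\end{equation}
The only nonroutine step is to rewrite the factor coming from the zero component. Here I would invoke the symmetry of the centred normal law of $C(t_k)$: since $-C(t_k) \overset{d}{=} C(t_k)$, we have $\P[-C(t_k) \leq x] = \P[C(t_k) \leq x] = F_{C(t_k)}(x)$. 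Substituting this and integrating over $x \in \R$ yields the claimed identity.

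I expect the main obstacle to be conceptual rather than computational: one must recognize the twofold role of the common factor. First, $C(t_k)$ cancels in the inter-component comparisons, which is what turns the dependent maximum into a maximum over \emph{independent} variables and makes the factorization possible at all. Second, its centred-normal symmetry is precisely what converts the lower-tail constraint $-C(t_k) \leq x$ into the distribution function $F_{C(t_k)}$ evaluated at $+x$, matching the form in \eqref{eq:maxprobs}. Once these two observations are in place, independence delivers the factorized integrand and \Cref{lem:max-indep} (already used to obtain $F_{\max_i(A_i(t_k))}$) confirms the product structure.
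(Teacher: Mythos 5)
Your proposal is correct and follows essentially the same route as the paper's own proof: reduce $\{\widetilde M(t_k) = \widetilde q_i(t_k)\}$ via the cancellation of the common factor to the event that $A_i(t_k)$ dominates $-C(t_k)$ and all $A_j(t_k)$, invoke the symmetry $C(t_k) \overset{d}{=} -C(t_k)$, and condition on $A_i(t_k)=x$ to factorize by independence and integrate against $f_{A_i(t_k)}$. The only (immaterial) difference is ordering --- the paper applies the symmetry of $C(t_k)$ before conditioning rather than after --- and your explicit remark that ties have probability zero is a small point the paper leaves implicit.
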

\begin{proof}
By definition of the common factor approximation, it holds
\begin{align}
\P\bigl[\widetilde M(t_k) = \widetilde q_i(t_k)\bigr] &= \P\bigl[C(t_k) + \max\bigl(-C(t_k), A_1(t_k), \dots, A_N(t_k)\bigr) = C(t_k) + A_i(t_k)\bigr] \nonumber \\
&= \P\bigl[\max\bigl(-C(t_k), A_1(t_k), \dots, A_N(t_k)\bigr) = A_i(t_k)\bigr] \nonumber \\
&= \P\bigl[C(t_k)\leq A_i(t_k), A_1(t_k)\leq A_i(t_k), \dots, A_N(t_k)\leq A_i(t_k)\bigr], \label{eq:Aievent}
\end{align}
where we used equal distributions $C(t_k) \overset{d}= -C(t_k)$ in the final step, as $C(t_k)$ is a centred normal random variable.
The events in \eqref{eq:Aievent} are independent from another when conditioned on $A_i(t_k)$, and all involved random variables are independent from another. Thus it follows
\begin{align}
\P\bigl[\widetilde M(t_k) = \widetilde q_i(t_k)\bigr] &= \E\Bigl[\E\bigl[ 1_{\{C(t_k)\leq A_i(t_k)\} \cap \{A_1(t_k)\leq A_i(t_k)\}\cap \dots \cap \{A_1(t_k)\leq A_i(t_k)\}} \mid A_i(t_k)\bigr]\Bigr] \nonumber \\ 
&=\int\limits_{\R}  \P\bigl[C(t_k)\leq x\mid A_i(t_k) = x\bigr] \prod\limits_{\substack{j=1\\ j \neq i}}^N \P\bigl[A_j(t_k)\leq x\mid A_i(t_k) = x\bigr]  f_{A_i(t_k)}(x)\d x \nonumber \\
&= \int\limits_{\R} f_{A_i(t_k)}(x) F_{C(t_k)}(x)  \prod\limits_{\substack{j=1\\ j \neq i}}^N F_{A_j(t_k)}(x)\d x,
\end{align}
which finishes the proof.
\end{proof}
We can thus define the weighted mean reversion function $\widetilde \kappa(t)$ of the common factor maximum, given by
\begin{equation}
\widetilde \kappa(t) = \sum_{i=1}^N \P[\widetilde M(t) = \widetilde q_i(t)] \kappa_i.
\end{equation}
We again estimate the variance of the maximum $\Var[M(t_k)]$ with the variance of the common factor maximum, given in Lemma~\ref{lem:cfmaxvar}. Then, the mean-reversion-based estimator for the variance of the integral of the maximum of the collateral spreads, which we denote by $\chi(T)$, is obtained:
\begin{align}\label{eq:sanko-variance}
\chi(T) := 2\int_0^T \e^{-\int_0^t\widetilde \kappa(u)\d u}\int_0^t\left(\e^{\int_0^s \widetilde\kappa(u)\d u}\Var\bigl[\widetilde M(s)\bigr]\right)\d s\, \d t \approx \Var\bigl[\int_0^T M(t) \d t\bigr] .
\end{align}
This allows us to define a second, mean-reversion-based CTD estimator for the second-order approximation in \eqref{eq:2ndtaylor}.
%
\red{
\begin{dfn}[The second-order common factor estimator 2]\label{def:CF2-MR}
The second-order approximation derived in Section~\ref{sec:approxCTD} of the spread based CTD discount factor in \eqref{eq:ctd-spread} is
\begin{equation}
\exp\Bigl(\E\Bigl[{-\int_0^T  M(t)\d t}\Bigr]\Bigr) \Bigl(1 + \frac12 \Var\int_0^T  M(t)\d t  \Bigr) \approx \E\Bigl[\exp\Bigl({-\int_0^T  M(t)\d t}\Bigr)\Bigr],
\end{equation}
where $M(t) = \max\bigl(0, q_1(t), \dots, q_{N}(t)\bigr)$ is the maximum of the collateral spreads and zero.

Let the conditional independence assumptions hold as in Definition~\ref{def:firstordercf} and let $\mathrm{CF}_1(T)$ denote the first order common factor estimator,
\begin{equation}
\mathrm{CF}_1(T) = \exp \Bigl( -\int\limits_{t_k\in\mathcal{T}} \E[\widetilde M(t_k)] \d t_k\Bigr),
\end{equation}
where $t_k\in\mathcal{T}$ are times in a suitable time discretization of $[0,T]$.

Assume at every time $t_k$ that the probability of the common factor estimates being the common factor maximum approximates the probability of each spread being the maximal spread,
\begin{equation}
\P\bigl[\widetilde M(t_k) = \widetilde q_i(t_k)\bigr] \approx \P\bigl[M(t_k) = q_i(t_k)\bigr], \quad i\in\{1, \dots, N\},
\end{equation}
and assume further that the variance of the maximum of the spreads can be approximated by the variance of the common factor maximum, $\Var[M(t_k)] = \Var[\widetilde M(t_k)]$.

Then the second-order common factor estimator with a mean-reversion-based integral variance estimator is given by 
\begin{equation}\label{eq:CF2-mr}
\mathrm{CF}_2^{(2)}(T) := \mathrm{CF}_1(T) \Bigl(1 + \frac12\chi(T)\Bigr),
\end{equation}
where $\chi(T) \approx \Var[\int_0^T M(t)\d t]$ is the mean-reversion-based variance estimator given by \eqref{eq:sanko-variance}.
\end{dfn}
}
\medskip
In summary, we obtain two different estimators for the second-order approximation of the CTD. The first, $\mathrm{CF}_2^{(1)}(t)$ defined in Definition~\ref{def:CF2-diff}, bases the variance of the integral on diffusion dynamics, and the second, $\mathrm{CF}_2^{(2)}(T)$ defined in Definition~\ref{def:CF2-MR}, bases the variance of the integral on mean-reverting dynamics.
%
\section{Model extension}\label{sec:extension}
As mentioned above, so far the common factor approximation introduced in Section~\ref{subsec:cf} is limited in the correlation structures it can model, due to the critical step in \eqref{eq:indep-max}, 
\begin{equation}
\max\bigl(0, C(t_k) + A_1(t_k), \dots, C(t_k) + A_N(t_k)\bigr) = C(t_k) + \max\bigl(-C(t_k), A_1(t_k), \dots, A_N(t_k)\bigr),
\end{equation}
and the equivalent step in \eqref{cfmaxothertrick}.
All common factor approximations $\widetilde q_i(t_k)$ need to contain the common factor $C(t_k)$ with the same sign and magnitude, so that $C(t_k)$ can be fully subtracted from each component. Consequently, as the common factor determines the correlation structure, the components will always be positively correlated. 
This issue can be resolved by extending the model to multiple subcategories with category-specific common factors $C^{(1)}(t_k)$, $C^{(2)}(t_k)$, and so on. By introducing correlations between these category-specific common factors, various correlation structures emerge in the common factor approximation. This is shown in detail in Appendix~\ref{sec:appx1}.
%
%
%
\section{Numerical results} \label{sec:example}
%
In this section, we present insights into the numerical computation of the common factor approximation models and further analyse their sensitivity to changes in the input parameters. This consideration is done for a three currency setup, where the parameters of the collateral spread model, defined in Section~\ref{sec:coll-spread-model}, are given in Table~\ref{table:spreadparams}. Note that only two collateral spreads need to be defined, as the third currency is embedded in the constant zero spread. The collateral spread parameters used here are derived from the collateral rate parameters given in \cite{SankovichZhu}, where also calibration is discussed. The exact details of this conversion from rates to spreads are given in Appendix~\ref{sec:appxparams}.  For each spread, the long-term mean levels $\theta_i$ are \red{chosen such that the expectations of the spreads equal their initial value.}
\par
The algorithm for computing the first and second-order common factor estimators, initialized with the Hull--White parametrization of the spreads $q_i$, is outlined in Appendix~\ref{sec:appxalgo}. 
\par
\begin{table}[]
\caption{Spread parameters}\label{table:spreadparams}
\centering
\begin{tabular}{ll|ll|ll}
$\kappa_1$ & $0.0078$ & $\xi_1$ & $0.0018$ & $q_1(0)$ & $0.000845$ \\ 
$\kappa_2$ & $0.0076$ & $\xi_2$ & $0.0023$ & $q_2(0)$ & $0.001514$ 
\end{tabular}
\end{table}
\par
\smallskip
Before discussing the accuracy results of the common factor CTD estimators, we first focus on some specific points in the numerical implementation. At each interpolation time $t_k\in\mathcal{T}$, both moments of the common factor maximum, $\widetilde M(t_k) = \max(0, \widetilde q_1(t_k), \dots, \widetilde q_N(t_k))$, depend on the same continuous convolution $f_{C(t_k)} * F_{\max_i(A_i(t_k))}$ between the density of the common factor and the cumulative distribution function of the maximum of  individual factors, defined in \eqref{eq:pdfC} and \eqref{eq:cdfmaxAi}, respectively. That is, at each time $t_k$, we need to compute the integrals
\begin{align}
\E[\widetilde M(t_k)] &= \int_0^\infty 1 -  \bigl( f_{C(t_k)} * F_{\max_i(A_i(t_k))} \bigr) (x) \d x, \label{eq:repconvolutionexpec}\\
\Var\bigl[\widetilde M(t_k)\bigr] &= \int_0^\infty 2x\Bigl(1- \bigl( f_{C(t_k)} * F_{\max_i(A_i(t_k))} \bigr)(x) \Bigr)\d x \nonumber \\
 &- \left( \int_0^\infty 1 -  \bigl( f_{C(t_k)} * F_{\max_i(A_i(t_k))} \bigr) (x) \d x \right)^2, \label{eq:repconvolutionvar}
\end{align}
derived in \eqref{eq:convolutionexpec} and \eqref{eq:convolutionvar}. 
These indefinite integrals appear computationally demanding, but can be efficiently implemented: The convolution  $(f_{C(t_k)} * F_{\max_i(A_i(t_k))})(x)$ quickly converges to $1$ as $x$ increases, which makes the integrands of \eqref{eq:repconvolutionexpec} and \eqref{eq:repconvolutionvar}  numerically distinguishable from zero only on a small domain $[0, L(t_k)]$ for some $L(t_k)\in\R$. Then, given a discretization of $[0, L(t_k)]$, the convolution can be evaluated efficiently  with the Fast Fourier Transform algorithm (\cite{CooleyTukey}). 
\par
In Table~\ref{table:moments} we assess an appropriate step size $\delta$ for the discretization of $[0, L(t_k)]$,  under the parameters of Table~\ref{table:spreadparams} with instantaneous correlation $\rho_{1,2}=0.3$. Note that the width of the domain, $L(t_k)$, increases with time as the variances of the collateral spreads increase. 
We obtain precise results for the integral moments $\E[\int_0^T M(t)\d t]$ and $\Var[\int_0^T M(t)\d t]$ by Monte Carlo simulation and compare these to the common factor estimators for the first and second integral moment, 
\begin{align}
 \sum_{t_k\in\mathcal{T}} \E[\widetilde M(t_k)] \Delta t_k \approx \E\Bigl[\int_0^T M(t)\d t\Bigr], \\
\Psi(T) \approx \Var\Bigl[\int_0^T M(t)\d t\Bigr],
\end{align}
where $\Psi(T)$ is the diffusion-based integral variance estimator given in \eqref{eq:diffvarestimator}. In both computations we consider a time discretization of step size $\Delta t_k = 0.1$. We find that  $\delta = 5 \times 10^{-5}$ in the convolution grid offers a good compromise between computational speed and precision, and use this grid in the convolution for the remainder of the section.

\begin{table}[]
\caption{In the computation of the expectation and variance of the common factor maximum $\widetilde M(t_k)$ at interpolation times $t_k$, a continuous convolution needs to be evaluated numerically. For different step sizes $\delta$ of the discretizing grid and maturities $T$, we give the relation between this step size and the resulting first and second-order estimators of the maximum integral $\int_0^T \max(0, q_1(t), q_2(t))\d t$.
}\label{table:moments}
\begin{tabular}{llllll}
\textbf{}                   & \textbf{}                      & \textbf{T=5} & \textbf{T=10} & \textbf{T=15} & \textbf{T=20} \\ \hline 
                            & No.\  points in $[0,L(T)]$                      & 83           & 117           & 140           & 158           \\
$\delta = 5 \times 10^{-4}$ & Error of expectation integral   & 0.001052     & 0.002049      & 0.003011      & 0.004003      \\
                            & Error of estimator $\Psi(T)$   & 0.000016     & 0.00011       & 0.000354      & 0.000831      \\ \hline
                            & No.\  points in $[0,L(T)]$                      & 818          & 1109          & 1318          & 1483          \\
$\delta = 5 \times 10^{-5}$ & Error of expectation integral   & 0.000106     & 0.000194      & 0.000295      & 0.000429      \\
                            & Error of estimator $\Psi(T)$   & 0.000012     & 0.000086      & 0.000287      & 0.000691      \\ \hline
                            & No.\  points in $[0,L(T)]$ & 7563         & 10442         & 12490         & 14219         \\
$\delta = 5 \times 10^{-6}$ & Error of expectation integral   & 0.000012     & 0.000012      & 0.000032      & 0.000064      \\
                            & Error of estimator $\Psi(T)$   & 0.000012     & 0.000084      & 0.000279      & 0.000675      \\ \hline
\end{tabular}
\end{table}

\subsection{Model accuracy}\label{subsec:sensi}
We remain in the three currency setup of Table~\ref{table:spreadparams} and consider the model accuracy for different values of the characterizing Hull--White parameters. Throughout this section, a maturity of $T=20$ years and a time discretization of step size $\Delta t_k = 0.1$ remains fixed.
For reference values, we use Monte Carlo simulation over the same time discretization to compute precise estimates of the approximated object of interest, given by
\begin{equation}
\E^T\Bigl[\exp\Bigl(-\int_0^T \max\bigl(0, q_1(t), q_2(t)\bigr)\d t\Bigr)\Bigr]
\end{equation} 
in \eqref{eq:ctd-spread}.
These reference values are compared to the first order common factor estimator $\mathrm{CF}_1(T)$, the second-order common factor estimator with diffusion-based variance $\mathrm{CF}_2^{(1)}(T)$ and the second-order common factor estimator with mean-reversion-based variance $\mathrm{CF}_2^{(2)}(T)$, defined in (\ref{eq:CF1}), (\ref{eq:CF2-diff}) and (\ref{eq:CF2-mr}), respectively.
\par
We begin by considering different values for the instantaneous correlation parameter $\rho_{1,2}$. By Lemma~\ref{lem:cfprops} part 3, the correlation of common factor approximations $\corr(\widetilde q_1(t_k),\widetilde q_2(t_k))$ is bounded, thus there is a maximal value of the instantaneous correlation parameter, above which the correlation cannot be expressed by the common factor model any longer because the correlation optimization parameter  $\gamma(t_k)$ has reached its maximum of $1$. For our chosen parameters, this boundary is at $\rho_{1,2} = 0.78$, as shown in Figure~\ref{fig:corr-sensi}. As the error graph to the right shows, the quality of the second-order approximations is uniformly very good for admissible correlation values.
\begin{figure}[]
\centering
\includegraphics[width=.49\textwidth, height=0.40833\textwidth]{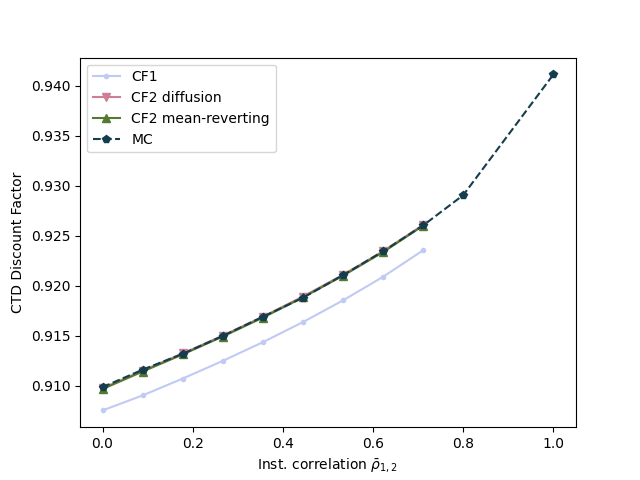}
\includegraphics[width=.49\textwidth, height=0.40833\textwidth]{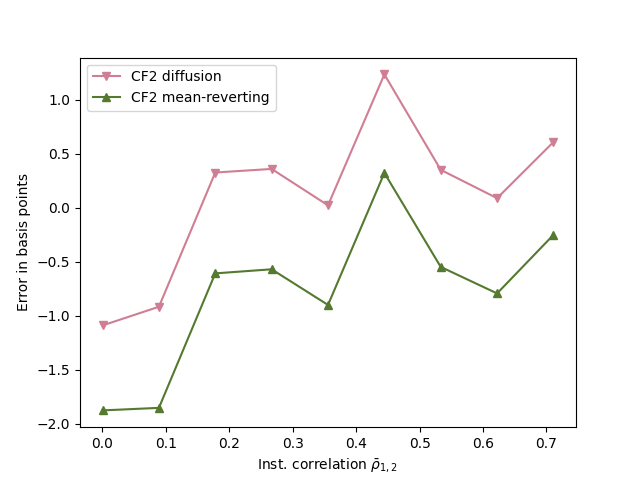}
\caption{Common factor approximations for three currencies plotted against variation in the instantaneous correlation parameter $\rho_{1,2}$. The collateral spreads are modelled with the parameters in Table~\ref{table:spreadparams}.}
\label{fig:corr-sensi}
\end{figure}
\par
We turn to the effect of the speed of mean reversion parameter. In Figure~\ref{fig:mr-sensi-log}, an instantaneous correlation value of $\rho_{1,2}=0.3$ is fixed and the speed of mean reversion parameters $\kappa_1$ and $\kappa_2$ is varied. To obtain a two-dimensional graph, both parameters are scaled with the same factor and their average,  $(\kappa_1 + \kappa_2)/2$, is drawn on the horizontal axes.
The error graph demonstrates the advantage of the mean-reversion-based second-order estimator, which is very accurate even under high speed of mean reversion. However, this comes at the cost of a higher numerical complexity which will be demonstrated in Section~\ref{subsec:timing}.
\begin{figure}[]
\centering
\includegraphics[width=.49\textwidth, height=0.40833\textwidth]{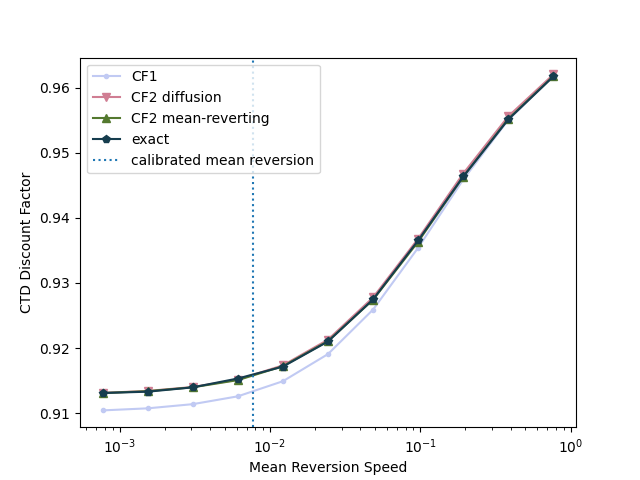}
\includegraphics[width=.49\textwidth, height=0.40833\textwidth]{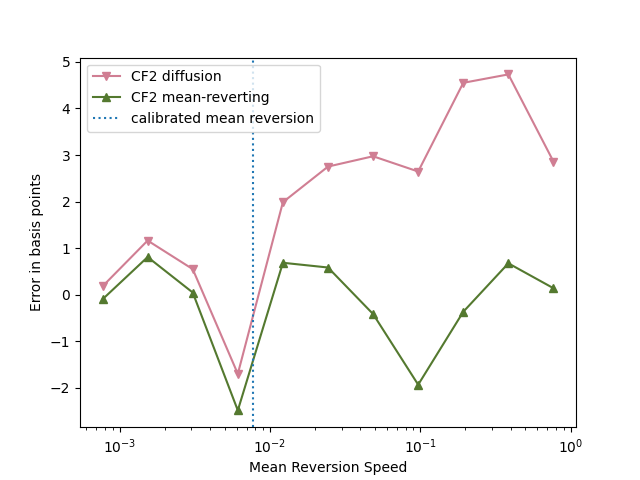}
\caption{Common factor approximations for three currencies plotted against variation in the mean reversion parameters. The mean reversion parameters $\kappa_1$ and $\kappa_2$ are varied at the same rate, the horizontal axes show the average speed of mean reversion $(\kappa_1 + \kappa_2)/2$ on a logarithmic scale.}
\label{fig:mr-sensi-log}
\end{figure}
\par
\red{As discussed in Section~\ref{sec:introduction}, collateral spreads exhibit less volatility than their associated, highly correlated collateral rates. This results in the maximum distribution of the spreads having less skewness than the maximum distribution of the rates, and therefore their distribution can be accurately approximated with a second-order model.}
This is illustrated in  Figure~\ref{fig:vola-sensi}, which shows that increasing the volatility back to levels expressed in the collateral rate formulation (where volatility parameters are around $0.007$), drastically increases the error of the second-order approximations.
\begin{figure}[]
\centering
  \includegraphics[width=.49\textwidth, height=0.40833\textwidth]{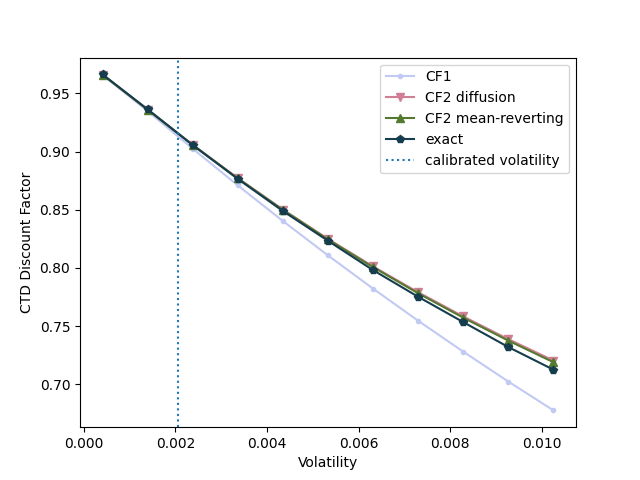}
  \includegraphics[width=.49\textwidth, height=0.40833\textwidth]{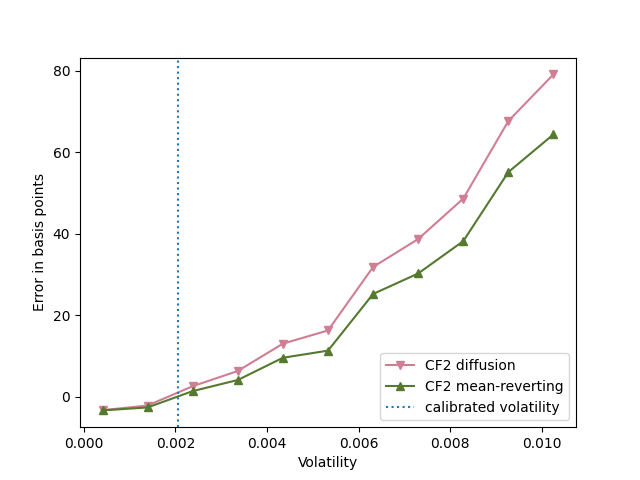}
\caption{Common factor approximations for three currencies plotted against variation in the volatility parameters. The volatility parameters $\xi_1$ and $\xi_2$ are varied at the same rate, the horizontal axes show the average volatility parameter $(\xi_1 + \xi_2)/2$.}
\label{fig:vola-sensi}
\end{figure}
%
\red{
\subsection{Robustness in a stressed market}
In the spring of 2020, in the wake of the COVID-19 pandemic, cross-currency spreads widened, which registered in the collateral spread projections, depicted in Figure~\ref{fig:stressed}. Compared to previous projections, the spreads both widened in the short and medium term and became more ambiguous in the long term. 
Given this term structure, Figure~\ref{fig:stressed} demonstrates robustness of the common factor approximation in a stressed scenario. The error over different maturities increases mildly under the parameters of Table~\ref{table:spreadparams}. Even doubling the volatility coefficient to reflect a surge in market turbulence only results in a 20-year difference of 9, respectively 12, basis points for the mean-reverting-based and diffusion-based second-order estimators. The deterministic approach is not included in the graphical representation, as the difference to the exact result is well above 5\%.
\begin{figure}[]
\centering
  \includegraphics[width=.49\textwidth, height=0.40833\textwidth]{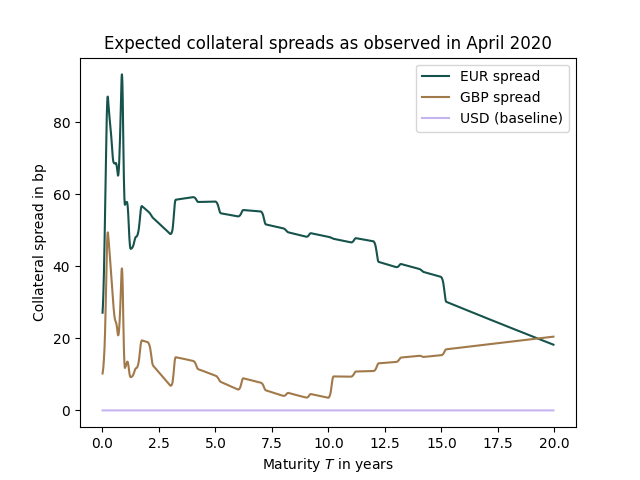}
  \includegraphics[width=.49\textwidth, height=0.40833\textwidth]{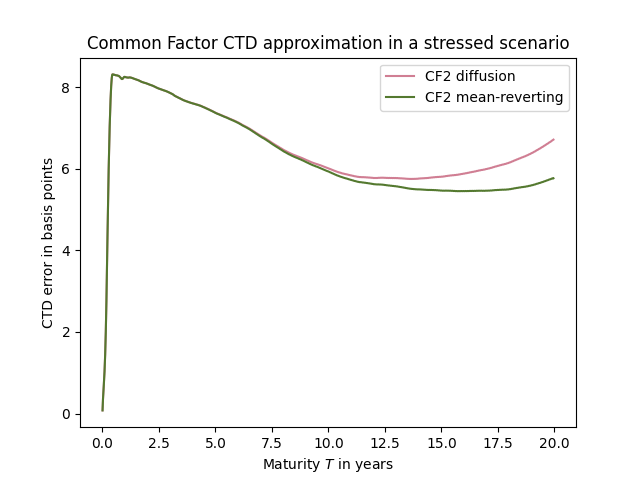}
\caption{\red{Stressed collateral spreads under market turbulence as observed in April 2020. Long term mean dynamics are displayed on the left, the resulting error over time of the common factor second-order approximations with these means 
 is given on the right.}}
\label{fig:stressed}
\end{figure}
}
\subsection{Computation time}\label{subsec:timing}
\begin{table}[]
\centering
\caption{Relative increase in computation time of the second-order estimators over the base-case of $3$ currencies. 
For each number of currencies, the proportion of time spent on computing the moments of the common factor maximum $\widetilde M(t)$ from the convolution integrals is given.
}
\label{table:speed}
\begin{tabular}{ccccc}
No.\ Currencies & 
\begin{tabular}{@{}c@{}}Relative time \\ $\mathrm{CF}_2^{(1)}(20)$\end{tabular} & 
\begin{tabular}{@{}c@{}}Relative time \\ $\mathrm{CF}_2^{(2)}(20)$\end{tabular} & 
\begin{tabular}{@{}c@{}}Moment proportion \\ $\mathrm{CF}_2^{(1)}(20)$\end{tabular} & 
\begin{tabular}{@{}c@{}}Moment proportion \\ $\mathrm{CF}_2^{(2)}(20)$\end{tabular} \\ \hline
\textbf{3}                      & 1                                                                                                                        & 1                                                                                                                        & 99.2\%                         & 43.9\%                         \\
\textbf{4}                      & 1.38                                                                                                                    & 1.50                                                                                                                    & 93.1\%                         & 37.7\%                         \\
\textbf{5}                      & 1.80                                                                                                                  & 2.08                                                                                                                   & 94.3\%                         & 35.9\%                         \\
\textbf{6}                      & 2.11                                                                                                                 & 2.68                                                                                                                   & 95.0\%                         & 32.9\%                         \\
\textbf{7}                      & 2.40                                                                                                                   & 3.31                                                                                                                   & 95.5\%                         & 30.6\%                         \\
\textbf{8}                      & 2.79                                                                                                                   & 3.99                                                                                                                   & 95.9\%                         & 29.7\%                        
\end{tabular}
\end{table}

A large advantage of the diffusion-based second-order estimator lies in its computation speed, which scales extremely well with the number of currencies. In Table~\ref{table:speed} we compare the computation times of the diffusion-based and the mean-reversion-based estimator by adding additional collateral spreads with randomized parameters of the same magnitude as given in Table~\ref{table:spreadparams}.
In the base-case of $3$ currencies, we have computation times of  $1.1$ and $2.5$ seconds, for the diffusion-based and mean-reversion-based estimator, respectively. These computations are performed on an ordinary consumer device where the calculations over the time-discretization $t_k \in \mathcal{T}$ are handled sequentially, this can be fully parallelized. 
Once the moments of the common factor maximum $\widetilde M(t_k)$ are computed for all interpolation times $t_k\in\mathcal{T}$, the diffusion-based estimator $\mathrm{CF}_2^{(1)}$ requires only a constant number of additional operations. The computation time of the common factor maximum moments increases linearly with the number of currencies, since for each currency added, only the cumulative distribution function of the independent maximum, defined in \eqref{eq:cdfmaxAi} by
\begin{equation}
F_{\max_i(A_i(t_k))}(x) = \prod\limits_{i=1}^N \Phi\Bigl(\frac{x-\mu_i(t_k)}{\sqrt{\sigma_i^2(t_k) - \sigma_{\min}^2(t_k)|\gamma(t_k)|}}\Bigr),\ x\in\R,
\end{equation}
has one factor added. The steep decline in proportion of computation time used on the moments from $3$ to $4$ currencies is explained by the time spent on computing the common factor parametrization. In the case of $3$ currencies, the correlation optimization parameter $\gamma(t_k)$ can be computed analytically (at a proportion of 0.8\% of the diffusion-estimators computation time), whereas from $4$ currencies onwards a numerical solver is needed which takes a proportion of around 5\% computation time in this example.
In contrast, computing the mean-reversion-based estimator $\mathrm{CF}_2^{(2)}$ get significantly more demanding as the number of additional currencies increases. This is rooted in the common factor maximum probabilities of Lemma~\ref{lem:maxprobs}: whenever an additional collateral spread is added, not only do we need to compute the probability over time that this new spread is the maximal spread, also computing the probabilities with which the already existing spreads are the maximal ones gains complexity.
%
%
%
\section{Conclusion}
Translating FX-adjusted collateral rates to collateral spreads is an effective method to reduce the model volatilities, which allows for precise pricing of the collateral choice option with a second order approximation. 
By imposing conditional independence on the collateral spreads, by means of a common factor approximation, a semi-analytical solution is obtained for the moments of the maximum of the collateral spreads. From these, precise second-order estimators are developed. In particular the diffusion-based second order common factor estimator admits precise results which can be computed very efficiently, particularly for a large number of available currencies.
%
\section*{Acknowledgements}
We thank two anonymous referees for their valuable comments, which helped improve this article.
This research is part of the ABC--EU--XVA project and has received funding from the European Union's Horizon 2020 research and innovation programme under the Marie Sk\l{}odowska--Curie grant agreement No.\ 813261. We are grateful to the participants of ABC--EU--XVA for their helpful feedback on the content of this article.

\bibliography{cf-Lit}

%
\appendix
\section{Model extension}\label{sec:appx1}
Assume that the correlation spreads $(q_1, \dots, q_N)$ can be ordered into two groups, $(q_1, \dots, q_n)$ being group 1 and $(q_{n+1}, \dots, q_N)$ being group 2.
Then, for all $t_k \in \mathcal{T}$, define the common factor approximations by group affiliation,
\begin{align}
&\widetilde q_i(t_k) = C^{(1)}(t_k) + A_i^{(1)}(t_k), &i\in\{1,\dots,n\}, \\
&\widetilde q_j(t_k) = C^{(2)}(t_k) + A_j^{(2)}(t_k), &j\in\{n+1,\dots,N\},
\end{align}
where all involved random variables are independent, except for the common factor pair $(C^{(1)}(t_k), C^{(2)}(t_k))$, which follows a specified joint distribution.
In the resulting model, common factor approximations from the same group will be positively correlated with the usual covariance, 
\begin{align}
&\Cov(\widetilde q_i(t_k), \widetilde q_\ell(t_k)) = \Var[C^{(1)}(t_k)], &\text{ for } 1\leq i,\ell \leq n, \\
&\Cov(\widetilde q_j(t_k), \widetilde q_m(t_k)) = \Var[C^{(2)}(t_k)], &\text{ for } n<j,m \leq N.
\end{align}
If they stem from different groups, the covariance will now be determined by the joint distribution of $(C^{(1)}(t_k), C^{(2)}(t_k))$, which allows for negative covariances:
\begin{equation}
\Cov(\widetilde q_i(t_k), \widetilde q_j(t_k)) = \Cov[C^{(1)}(t_k), C^{(2)}(t_k)], \text{ for } 1\leq i\leq n < j \leq N.
\end{equation}
The maximum of this extended common factor model still offers a decomposition akin to \eqref{eq:indep-max}, from which a cumulative distribution function is analytically available. Denote the maxima over the idiosyncratic factors by
\begin{align}
H^{(1)}(t_k) &:= \max(A_1^{(1)}(t_k), \dots, A_n^{(1)}(t_k)), \\
H^{(2)}(t_k) &:= \max(A_{n+1}^{(2)}(t_k), \dots, A_N^{(2)}(t_k)).
\end{align}
Then, the common factor maximum can be rewritten as
\begin{align}
\max(0, \widetilde q_1(t_k), \dots, \widetilde q_N(t_k)) &= \max(0, \max(\widetilde q_1(t_k), \dots, \widetilde q_n(t_k)), \max(\widetilde q_{n+1}(t_k), \dots, \widetilde q_N(t_k))) \nonumber \\
&= \max\bigl(0, C^{(1)}(t_k) + H^{(1)}(t_k), C^{(2)}(t_k) + H^{(2)}(t_k)\bigr) \label{eq:extendedcfmax}
\end{align}
Expression \eqref{eq:extendedcfmax} still admits a cumulative distribution function:
\begin{align}
&\P\Bigr[\max\bigl(0,  C^{(1)}(t_k) + H^{(1)}(t_k), C^{(2)}(t_k)+ H^{(2)}(t_k)\bigr) \leq z\Bigr] \nonumber \\
&=	\begin{cases}
	0, & z < 0, \\
	\P\Bigr[\max\bigl(C^{(1)}(t_k) + H^{(1)}(t_k), C^{(2)}(t_k)+ H^{(2)}(t_k)\bigr) \leq z\Bigr], & z \geq 0.
	\end{cases}
\end{align}
Further, it holds that
\begin{align}
&\P\Bigr[\max\bigl(C^{(1)}(t_k) + H^{(1)}(t_k), C^{(2)}(t_k)+ H^{(2)}(t_k)\bigr) \leq z\Bigr] \nonumber\\
& = \P\Bigr[\bigcup_{x_1,x_2\in\R} \Bigl\{C^{(1)}(t_k) \leq x_1,\ H^{(1)}(t_k) \leq z-x_1,\ C^{(2)}(t_k) \leq x_2,\ H^{(2)}(t_k) \leq z-x_2\Bigr\}\Bigr] \nonumber\\
& = \int\limits_\R \int \limits_\R \Biggl( \int_{-\infty}^{z-x_1} \int_{-\infty}^{z-x_2} f_{(C^{(1)}(t_k), C^{(2)}(t_k))}(x_1, x_2) f_{H^{(1)}(t_k)}(u) f_{H^{(2)}(t_k)}(v) \d u \d v \Biggr) \d x_1 \d x_2 \nonumber\\
& = \int\limits_\R \int \limits_\R f_{(C^{(1)}(t_k), C^{(2)}(t_k))}(x_1, x_2) F_{H^{(1)}(t_k)}(z-x_1) F_{H^{(2)}(t_k)}(z-x_2) \d x_1 \d x_2.
\end{align}
Here, $f_{(C^{(1)}(t_k), C^{(2)}(t_k))}$ denotes the pre-specified density of the common factors and $F_{H^{(1)}(t_k)}$, $F_{H^{(2)}(t_k)}$ are the cumulative distribution functions of the maximum over the idiosyncratic factors, which can be obtained from Lemma~\ref{lem:max-indep}.
\par
In general, this extension is not limited to two groups, allowing for further correlation structures at the cost of an increasingly sophisticated cumulative distribution function of the common factor maximum.

\section{Algorithm}\label{sec:appxalgo}
We outline the necessary computations to obtain the first and second order common factor approximations. Note that the diffusion and mean-reverting estimators can be computed independently from another.

\fbox{
\parbox{\textwidth}{
Inputs:
\begin{enumerate}
\setlength{\itemsep}{0pt}
\setlength{\parskip}{0pt}
	\item Maturity $T$ and number of additional currencies $N$
	\item Time discretization $\mathcal{T} = \{0 = t_0 < t_1 < \dots < t_R = T\}$
	\item Hull--White parametrization of collateral spreads: $(\kappa_i, \xi_i, \theta_i(t_k), q_i(0))$ for all $i\leq N$ and $t_k \in \mathcal{T}$
	\item Instantaneous correlations $(\rho_{i,j})$ for all $i,j \leq N$.
\end{enumerate}
Common factor approximation:
\begin{enumerate}
\setlength{\itemsep}{0pt}
\setlength{\parskip}{0pt}
\setcounter{enumi}{4}
	\item Common factor parameters for every $t_k \in \mathcal{T}$:
	\begin{enumerate}
	\setlength{\itemsep}{0pt}
	\setlength{\parskip}{0pt}
		\item Compute correlation parameter $\gamma(t_k)$ (from \eqref{eq:optimalgamma})
		\item Compute means and variances of $C(t_k), A_i(t_k)$ for all $i\leq N$ (from \eqref{eq:C}-\eqref{eq:Ai})
	\end{enumerate}
	\item Moments of the maximum for every $t_k \in \mathcal{T}$:
	\begin{enumerate}
	\setlength{\itemsep}{0pt}
	\setlength{\parskip}{0pt}
		\item Compute convolution $(F_{\max_i(A_i(t_k))} * f_{C(t_k)})(x)$ over domain $(0, L)$ large enough that $(F_{\max_i(A_i(t_k))} * f_{C(t_k)})(L)\approx 1$
		\item Compute $E[\widetilde M(t_k)] = \int_0^\infty (1-(F_{\max_i(A_i(t_k))} * f_{C(t_k)})(x) \d x$
		\item Compute $\E[\widetilde M(t_k)^2] = \int_0^\infty 2x(1-(F_{\max_i(A_i(t_k))} * f_{C(t_k)})(x) \d x$
	\end{enumerate}

	\item Second order estimators
	\begin{enumerate}
	\setlength{\itemsep}{0pt}
	\setlength{\parskip}{0pt}
		\item Diffusion estimator
		\begin{enumerate}
		\setlength{\itemsep}{0pt}
		\setlength{\parskip}{0pt}
			\item: Compute $\Psi(T) = \Var[\int_0^T X(t)\d t]$ from \eqref{eq:intvarX}
		\end{enumerate}

		\item Mean-reverting estimator:
		\begin{enumerate}
		\setlength{\itemsep}{0pt}
		\setlength{\parskip}{0pt}
		\item Compute $\P[\widetilde M(t_k) = q_i]$ for all $i\leq N$ and $t_k\in\mathcal{T}$ with \eqref{eq:maxprobs}
		\item Compute $\chi(T)$ estimator from \eqref{eq:sanko-variance}
		\end{enumerate}
	\end{enumerate}
\end{enumerate}
Outputs:
\begin{enumerate}
\setlength{\itemsep}{0pt}
\setlength{\parskip}{0pt}
\setcounter{enumi}{7}
	\item First order estimator: $\mathrm{CF}_1(T) = \exp(-\sum_{t_k} \E[\widetilde M(t_k)] \Delta t_k)$
	\item Second order estimators
	\begin{enumerate}
	\setlength{\itemsep}{0pt}
	\setlength{\parskip}{0pt}
		\item Diffusion based: 	$\mathrm{CF}_2^{(1)}(T) = \mathrm{CF}_1(T) (1-\frac12 \Psi(T))$
		\item Mean-reversion based: $\mathrm{CF}_2^{(2)}(T) = \mathrm{CF}_1(T) (1-\frac12 \chi(T))$
	\end{enumerate}
\end{enumerate}
}}  

\section{From rate to spread parameters}\label{sec:appxparams}
In the following, we show how collateral rate parameters can be translated to collateral spread parameters. 
To this end, let $r_0$ and $r_1$ be the collateral rates of two currencies with Hull--White dynamics
\begin{align}
\d r_\ell(t) &= \kappa^{r}_\ell \bigl(\theta^{r}_\ell(t) - r_\ell(t)\bigr)\d t + \xi^{r}_\ell \d W^{r}_\ell(t),\\
r_\ell(0) &= r_{\ell,0},\\
\d [W_0(t), W_1(t)] &= \rho^{r}_{0,1}\d t,
\end{align}
where the superscript $r$ underlines that these are the Hull--White parameters of FX-adjusted collateral rates and $\ell\in\{0,1\}$. We aim to explain the choice of our parameters of the difference processes, namely the collateral spreads in Table~\ref{table:spreadparams},
\begin{multline}
\d (r_1 - r_0)(t) = \Bigl(\kappa_1^{r}\bigl(\theta_1^{r}(t) - r_1(t)\bigr) - \kappa_0^{r}\bigl(\theta_0^{r}(t) -r_0(t)\bigr)\Bigr)\d t + \Bigl(\xi_1^{r}\d W_1^{r}(t) - \xi_0^{r} \d W_0^{r}(t)\Bigr).
\end{multline}
The drift part of the dynamics can be expressed as
\begin{equation}
\Bigl(\bigl(\kappa_1^{r}\theta_1^{r}(t) - \kappa_0^{r}\theta_0^{r}(t)\bigr) - \frac{\kappa_0^r + \kappa_1^r}{2} \bigl(\frac{2\kappa_1^r}{\kappa_0^r + \kappa_1^r}r_1(t) - \frac{2\kappa_0^r}{\kappa_0^r + \kappa_1^r}r_0(t)\bigr) \Bigr)\d t
\end{equation}
which, if the mean reversion speeds $\kappa_0^r$, $\kappa_1^r$ are similar, is closely approximated by
\begin{equation}
\Bigl(\bigl(\kappa_1^{r}\theta_1^{r}(t) - \kappa_0^{r}\theta_0^{r}(t)\bigr) - \frac{\kappa_0^r + \kappa_1^r}{2} \bigl(r_1(t) - r_0(t)\bigr) \Bigr)\d t.
\end{equation}
This corresponds to the drift part of a Hull--White process with averaged speed of mean reversion $(\kappa_0^r + \kappa_1^r)/{2}$.
Analogously, for the volatility term it holds approximately
\begin{equation}
\xi_1^r \d W^r_1(t) - \xi_0^r \d W^r_0(t) \approx \frac{\xi_1^r + \xi_0^r}{2} \d \bigl(W_1^r(t) - W_0^r(t)\bigr),
\end{equation}
when the volatility parameters are similar. The difference of correlated Brownian motions can be expressed as an independent, scaled Brownian motion $U$,
\begin{equation}
\d \bigl(W_1^r(t) - W_0^r(t)\bigr) = \sqrt{2-2\rho^r_{0,1}} \d U(t).
\end{equation}
In summary, the difference process $q_1(t) = (r_1 - r_0)(t)$ can be well approximated by a Hull--White process with speed of mean reversion parameter $\kappa_1 = (\kappa_1^r+\kappa_0^r)/2$ and volatility parameter $\xi_1 = \sqrt{2-2\rho^r_{0,1}}(\xi_1^r + \xi_0^r)/2$, given that the collateral rate parameters are not largely different. The initial spread value is directly obtained from the difference of initial values,  $q_1(0) = r_1(0)-r_0(0)$.
Using this approximation, we translate the estimated parameters of \cite{SankovichZhu} (Table~\ref{table:rateparams}) to exemplary spread parameters (Table~\ref{table:spreadparams}). It is important to note that the collateral rates are parametrized under the $\Q_0$ measure, whilst the collateral spreads are parametrized under the $T$-forward measure. This change of measure is expressed by a shift in the drift term, which can be included in the long-term means $\theta_i(t)$. 
\begin{table}[htb]
\centering
\caption{Rate parameters}\label{table:rateparams}
\begin{tabular}{ll|ll|ll|ll}
$\kappa^r_0$ & $0.0072$ & $\xi^r_0$ & $0.0073$ & $r_0(0)$ & $0.000845$ & $\rho^r_{0,1}$ & $0.97$\\
$\kappa^r_1$ & $0.0083$ & $\xi^r_1$ & $0.0073$ & $r_1(0)$ & $0.001514$ & $\rho^r_{0,2}$ & $0.95$\\
$\kappa^r_2$ & $0.0080$ & $\xi^r_2$ & $0.0074$ & $r_2(0)$ & $0.002265$ & $\rho^r_{1,2}$ & $0.95$
\end{tabular}
\end{table}


\end{document}